\pgfplotsset{width=7.5cm,compat=1.12}
\newenvironment{ex}{\noindent\begin{example}}{\hfill$\boxtimes$\end{example}}
\newtheorem{thm}{Theorem}
\newtheorem{lma}{Lemma}
\newcommand{\A}{\mathcal{A}}
\newcommand{\D}{\Sigma}
\newcommand{\Ds}{\Sigma^*}
\newcommand{\den}[2][{}]{[\hspace{-2.5pt}[ #2 ]\hspace{-2.5pt}]_{#1}}
\newcommand{\DERS}[1]{\,{\raisebox{-.1em}{\scriptsize$\xrightarrow{\begin{array}{@{}c@{}}#1\\[-.4em]\end{array}}$}}\,}
\newcommand{\cond}[1]{\varphi_{#1}}
\newcommand{\rand}{\texttt{\&}}
\newcommand{\alt}{\texttt{|}}
\newcommand{\rnot}{\texttt{\char`~}}
\newcommand{\conc}{{\cdot}}
\newcommand{\eqdef}{\stackrel{\textsc{\raisebox{-.15em}{\tiny{def}}}}{=}}
\newcommand{\BOOL}{\mathbb{B}}
\newcommand{\TT}{\mathbf{true}}
\newcommand{\FF}{\mathbf{false}}
\newcommand{\band}{\mathrel{\mathbf{and}}}
\newcommand{\bor}{\mathrel{\mathbf{or}}}
\newcommand{\bif}{\textbf{if}}
\newcommand{\bthen}{\textbf{then}}
\newcommand{\belse}{\textbf{else}}
\newcommand{\botherwise}{\mathbf{otherwise}}
\newcommand{\biff}{\boldsymbol{\Leftrightarrow}}
\newcommand{\bnot}{\mathrel{\mathbf{not}}}
\newcommand{\eps}[1][{}]{\mbox{$\varepsilon_{#1}$}}
\newcommand{\emp}{\bot}
\newcommand{\DERName}{\boldsymbol{\delta}}
\newcommand{\DER}[3][{}]{\ifthenelse{\equal{#1}{}}{\DERName_{#2}(#3)}{\DERName_{#2}^{#1}(#3)}}
\newcommand{\lookaround}{\leq}
\newcommand{\RE}{\mbox{$\mathbf{ERE_{\lookaround}}$}}
\newcommand{\REs}{\mbox{$\mathbf{RE\#}$}}
\newcommand{\REc}{\mbox{$\mathbf{ERE}$}}
\newcommand{\REstd}{\mbox{$\mathbf{RE}$}}
\newcommand{\IfThenElse}[3]{\bif\;#1\;\bthen\;#2\;\belse\;#3}
\newcommand{\caret}{\char`^}
\newcommand{\bslash}[1]{\texttt{\char`\\#1}}
\newcommand{\tuple}[1]{\langle{#1}\rangle}
\newcommand{\pair}[2]{\tuple{#1,#2}}
\newcommand{\dollar}{\char`$}
\newcommand{\IsNullableName}{\mathit{Null}}
\newcommand{\IsNullable}[2][{}]{\IsNullableName_{#1}(#2)}
\newcommand{\str}[1]{\textrm{"}\texttt{#1}\textrm{"}}
\newcommand{\ReverseOf}[1]{#1^r\!}
\newcommand{\REV}[1]{\ReverseOf{#1}}
\newcommand{\AllEndsName}{\mathit{AllEnds}}
\newcommand{\AllEnds}[3]{\AllEndsName(#1,#2,#3)}
\newcommand{\MaxEndName}{\mathit{MaxEnd}}
\newcommand{\MaxEnd}[3]{\MaxEndName(#1,#2,#3)}
\newcommand{\MaxEndMain}[2]{\MaxEndName(#1,#2)}
\newcommand{\la}[2][{}]{\texttt{(?=}#2\texttt{)}_{#1}}
\newcommand{\laneg}[1]{\texttt{(?!}#1\texttt{)}}
\newcommand{\lb}[1]{\texttt{(?<=}#1\texttt{)}}
\newcommand{\lbneg}[1]{\texttt{(?<!}#1\texttt{)}}
\newcommand{\RELoop}[4][{}]{#2\texttt{\{}#3{,#4}\ifthenelse{\equal{#1}{}}{}{,#1}\texttt{\}}}
\newcommand{\RECount}[2]{#1\texttt{\{}#2\texttt{\}}}
\newcommand{\st}{\texttt{*}}
\newcommand{\plus}{\texttt{+}}
\newcommand{\ANCH}{\textrm{\scriptsize\faAnchor}}
\newcommand{\sanchor}{\bslash{A}}
\newcommand{\eanchor}{\bslash{z}}
\newcommand{\first}[1]{#1_1}
\newcommand{\second}[1]{#1_2}
\newcommand{\IsMatchName}{\textit{IsMatch}}
\newcommand{\NoMatch}{\lightning}
\newcommand{\loc}[2]{#1{[}#2{]}}
\newcommand{\width}[1]{|#1|}
\newcommand{\head}[1]{\mathit{hd}(#1)}
\newcommand{\isempty}[1]{\width{#1}\,{=}\,0}
\newcommand{\s}{\theta}
\newcommand{\SU}[1][{}]{\ifthenelse{\equal{#1}{}}{\mathbf{Span}}{\mathbf{Span}(#1)}}
\newcommand{\LU}[1][{}]{\mathbf{Loc}_{#1}}
\newcommand{\LUNF}[1][{}]{\ifthenelse{\equal{#1}{}}{\mathbf{Loc}^+}{\mathbf{Loc}^+(#1)}}
\newcommand{\FindMatchName}{\textrm{$\mathit{LLMatch}$}}
\newcommand{\FindMatch}[2]{\FindMatchName{}(#1,#2)}
\newcommand{\blet}{\textbf{let}}
\newcommand{\bin}{\textbf{in}}
\newcommand{\breturn}{\textbf{return}}
\newcommand{\LANGREName}{\mathcal{M}}
\newcommand{\LANGRE}[1]{\LANGREName(#1)}
\newcommand{\IsFinal}[1]{\textit{Final}(#1)}
\newcommand{\IsNonfinal}[1]{\textit{Nonfinal}(#1)}
\newcommand{\IsInitial}[1]{\textit{Initial}(#1)}
\newcommand{\ITEBrace}[3]{\left\{\begin{array}{@{}l@{\;}l} #2,&\bif\, #1; \\ #3,&\botherwise. \end{array} \right.}
\newcommand{\resharp}{\REs}
\newcommand{\NF}[1]{\mathbf{LNF}(#1)}
\newcommand{\Elimz}[1]{#1^{\eanchor\mapsto\eps}}
\newcommand{\LLMatches}[2]{\mathit{LLMatches}(#1,#2)}
\newcommand{\all}{\texttt{\_\st}}
\newcommand{\anychar}{\texttt{$\top$}}
\newcommand{\xor}{\oplus}
\newcommand{\xnor}{\odot}
\begin{document}

\title{RE\#: High Performance Derivative-Based Regex Matching with Intersection, Complement and Lookarounds} %

\author{Ian Erik Varatalu}
\orcid{0000-0003-1267-2712}             %
\affiliation{
  \institution{Tallinn University of Technology}            %
  \country{Estonia}                    %
}
\email{ian.varatalu@taltech.ee}          %

\author{Margus Veanes}
\orcid{0009-0008-8427-7977}             %
\affiliation{
	\institution{Microsoft}            %
	\country{USA}                    %
}
\email{margus@microsoft.com}

\author{Juhan-Peep Ernits}
\orcid{0000-0002-4591-0425}             %
\affiliation{
	\institution{Tallinn University of Technology}            %
	\country{Estonia}                    %
}
\email{juhan.ernits@taltech.ee}          %

\begin{abstract}
  We present a tool and theory {\REs} for regular expression matching
  that is built on symbolic derivatives, does not use backtracking, and, in
  addition to the classical operators, also supports complement,
  intersection and lookarounds.  We develop the theory formally and
  show that the main matching algorithm has \emph{input-linear} complexity
  both in theory as well as experimentally. We apply thorough
  evaluation on popular benchmarks that show that
  {\REs} is \emph{over 71\% faster than the next fastest regex engine in Rust}
  on the baseline, and
  \emph{outperforms all state-of-the-art 
  engines on extensions of the benchmarks
  often by several orders of magnitude.}
\end{abstract}

\maketitle
\begin{CCSXML}
<ccs2012>
<concept>
<concept_id>10003752.10003766.10003776</concept_id>
<concept_desc>Theory of computation~Regular languages</concept_desc>
<concept_significance>500</concept_significance>
</concept>
<concept>
<concept_id>10010147.10010148.10010149.10010160</concept_id>
<concept_desc>Computing methodologies~Boolean algebra algorithms</concept_desc>
<concept_significance>500</concept_significance>
</concept>
</ccs2012>
\end{CCSXML}

\section{Introduction}
\label{sec:intro}

In the seminal paper~\cite{Thom68} Thompson
describes his regular expression search
algorithm for \emph{standard} regular expressions
at the high level as follows,
\begin{quote}
  ``In the terms of Brzozowski,
  this algorithm continually takes the left derivative of the
  given regular expression with respect to the text to be searched.''
\end{quote}
citing Brzozowski's work~\cite{Brz64} from four years earlier.
Thompson's algorithm compiles regular expressions into a very
efficient form of \emph{automata} and \emph{has stood the test of time}:
its variants today constitute the core of many state-of-the-art
industrial \emph{nonbacktracking} regular expression engines such as
\emph{RE2}~\cite{Cox10,re2} and the regex engine of \emph{Rust}~\cite{rust}.  Earlier automata
based classical algorithms for regular expression matching
include~\cite{MY60} and~\cite{Glu61} as a variant of the latter is
used in \emph{Hyperscan}~\cite{HyperscanUsenix19}.

Thompson's algorithm as well as Glushkov's construction have, by
virtue of their efficiency for the \emph{standard} or \emph{classical}
subset, to some degree, influenced how regular expression features
have evolved over the past decades.  The standard fragment allows only
\emph{union} (\verb+|+) as a Boolean operator,
and, unfortunately,
neither \emph{intersection} (\verb+&+) nor \emph{complement} (\verb+~+)
ever made it into the official notation, not even as reserved operators. In
particular, the standard fragment (with \emph{anchors}), has been
considered more-or-less as the only feasible and safe fragment of regular
expressions for which matching can be performed reliably \emph{without
backtracking} in \emph{input-linear} time.
\begin{quote}\em
The broad goal of this paper is to break this decades long belief
that nonbacktracking algorithms for matching are only viable for
standard regular expressions.
\end{quote}
\emph{Backtracking} based matching~\cite{spencermatching}, although much more
general, is considered to be unsafe in security critical applications
because backtracking may cause nonlinear search complexity that can
expose catastrophic denial of service
vulnerabilities~\cite{redos,redosimpact,rethinkingregexes}.  To
increase expressivity of the standard fragment, extensions such as
unbounded positive and negative \emph{lookaheads} have been added, but
are currently only supported by some backtracking based regex
backends.  A typical example of a regex involving lookaheads is a
\emph{password filter}

$
\texttt{(?=.*[a-z])(?=.*[A-Z])(?=.*\bslash{d})(?=.*[!-/])\bslash{S}*}
$

\noindent
that checks the presence of at least one lowercase letter, one uppercase
letter, one digit, and one special character.
If only the standard fragment is allowed then the size of an
equivalent regex grows \emph{factorially}
as the number of such individual constraints is increased,
and becomes not only unreadable and very difficult to formulate, but also infeasible.
With intersection
the above regular expression takes the following equivalent form
as a \emph{filter}

$
\texttt{(.*[a-z].*)}\rand
\texttt{(.*[A-Z].*)}\rand
\texttt{(.*\bslash{d}.*)}\rand
\texttt{(.*[!-/].*)}\rand
\texttt{\bslash{S}*}
$

\noindent
It was four decades after Thompson's work when~\cite{ORT09} recognized
that a key aspect of Brzozowski's work had been forgotten:

\begin{quote}
    ``It easily supports extending the regular-expression
    operators with boolean operations, such as intersection and complement.
    Unfortunately, this
    technique has been lost in the sands of time and few computer scientists are aware of it.''
\end{quote}
Namely that derivatives provide an elegant
algebraic framework to formulate matching in functional programming,
\emph{not only for standard regular expressions} but also supporting
\emph{intersection} and \emph{complement}, and can moreover naturally support
\emph{large alphabets}.  The first \emph{industrial} implementation of
derivatives for standard regexes in an imperative language
(\texttt{C\#}) materialized a decade later~\cite{Vea19} and was used
for \emph{credential scanning}~\cite{CredScan} while preserving input-linear
complexity of match search. This work was recently extended to support
anchors and to maintain PCRE (backtracking) match
semantics~\cite{PLDI2023} and is now part of the official release of
{.NET} through the new \texttt{NonBacktracking} regular expression
option, where one of the key contributions is a new formalization of
derivatives that is based on \emph{locations} in words rather than
individual characters, which made it possible to support anchors using
derivatives.

Here we build on the theory~\cite{PLDI2023} and extend it to include
regular expressions that allow \emph{all} of the Boolean operators,
including \emph{intersection} and \emph{complement}, as well as any
other Boolean operator that is convenient to use for the matching task
at hand, such as e.g. \emph{symmetric difference} (\emph{XOR}). The framework also supports
\emph{lookarounds} as a generalization of anchors.

 To illustrate a combined use of many of the
 extended features, consider the regex 

 $
 \underbrace{\texttt{(?<=author.*).*}}_{(1)}\,\&\,
 \underbrace{\texttt{\char`~(.*and.*)}}_{(2)}\,\&\,
 \underbrace{\texttt{\bslash{}b\bslash{w}.*\bslash{w}\bslash{}b}}_{(3)}
 $
 
\begin{wrapfigure}{c}{5cm}
  \vspace{-1em}
\begin{tt}
  \begin{small}
\begin{tabbing}
  @a\=rticle\{\=ORT09,\\
\>author \>= \{\colorbox{green!50}{Scott Owens} and
               \colorbox{green!50}{John H. Reppy} and
               \colorbox{green!50}{Aaron Turon}\},\\
\>title  \>= \{Regular-expression Derivatives Re-examined\},\\
\>journal \>= \{J. Funct. Program.\},\\
\>year \>= \{2009\},\\
\}
\end{tabbing}
\end{small}
\end{tt}
\vspace{-1em}
 \end{wrapfigure}
\noindent
 that matches all substrings in all \emph{lines}
 (\texttt{.} matches any character except \bslash{n})
that are: 1) preceded by \texttt{"author"} (via the lookbehind),
2) do not contain \texttt{"and"}, and
3) begin and end with \emph{word letters} (\verb+\w+) surrounded by 
\emph{word boundaries} (\verb+\b+).
This regex finds all the authors in
a bibtex text, such as the one shown above,
where all the match results are highlighted.

We develop our theory formally and show that our matching algorithm is
\emph{input-linear} on single match search despite all the extensions.  We have implemented
the theory in a new tool {\resharp} that is built on top of the open
source codebase of {.NET} regular expressions~\cite{regexsources}
where we have made use of several recent features available in
{.NET9}, such as e.g. further SIMD vectorization of string matching
functions and implementation of the Teddy \cite{teddy} algorithm.  We
show through comprehensive evaluation, using the
\emph{BurntSushi/rebar} benchmarking tool~\cite{rebar}
evaluating engines with respect to finding \emph{all matches}, that
\begin{figure}
  \centering
  \begin{subfigure}{6.8cm}
    \includegraphics*[width=6.8cm,height=6cm,trim=0.7cm 0cm 1.4cm 1cm]{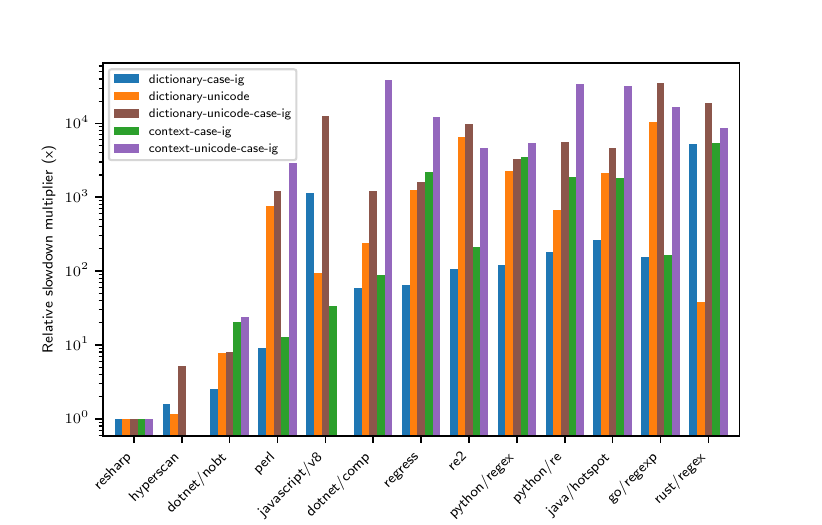}
    \vspace{-1.5em}
    \caption{Monster benchmarks (Section~\ref{sec:eval-monster}). \label{fig:monster}}
  \end{subfigure}
\hfill
  \begin{subfigure}{6.8cm}
    \includegraphics*[width=6.8cm,height=6cm,trim=0.7cm 0cm 1.4cm 1cm]{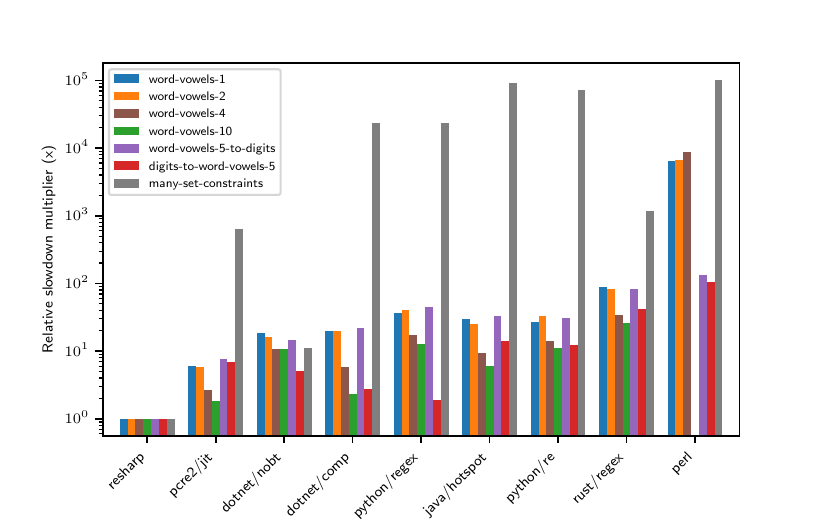}
    \vspace{-1.5em}
    \caption{Sets and Unicode benchmarks (Section~\ref{sec:eval-sets-unicode}). \label{fig:sets-and-unicode}}
  \end{subfigure}
\vspace{-1em}
\caption{Two evaluations from Section~\ref{sec:evaluation}.
  {\REs} is the baseline and $y$-axis is relative slowdown in \emph{log} scale.}
\label{fig:intro-monster-sets-and-unicode}
\end{figure}
\begin{enumerate}
\item[$\checkmark$] \textbf{Baseline}: {\resharp} is not only on par
  with state-of-the-art matching engines on the benchmarks but places
  overall \emph{first} in the summarized measurement
  (Section~\ref{sec:eval-baseline}), \emph{over 71\% faster than the
  next fastest engine Rust} as shown in
  Figure~\ref{fig:baseline-evaluation} ($2.54/1.48 \approx 1.716$).
  We explain the key aspects of each individual experiment. 
\item[$\checkmark$] \textbf{Extensions}: We provide compelling
  experimental evidence on a new set of benchmarks involving many of the
  extended features that are either very difficult to express or fall
  outside the expressivity of existing tools.
  Here we want to draw attention to Figure~\ref{fig:intro-monster-sets-and-unicode} where
  {\REs} \emph{outperforms all other engines and often by several orders of magnitude}.
\end{enumerate}
To a large extent, most optimizations rely heavily on the algebraic
treatment of derivatives where regex based rewrite rules are available
that would otherwise be very difficult to detect solely at the level of
automata.  Several of the optimizations that
affect the baseline evaluation results in {\resharp} are applicable
also to the \texttt{NonBacktracking} engine of {.NET}.

\subsubsection*{Contributions}
\label{sec:contributions}
In summary, we consider the following as the main contributions of this paper:
\begin{itemize}
\item[$\checkmark$] A new symbolic derivative based nonbacktracking
  matching algorithm for regular expression matching that supports
  \emph{intersection}, \emph{complement} and \emph{lookarounds} in regexes,
  with correctness theorem, while preserving \emph{input linear} performance
  (Section~\ref{sec:RE}).
\item[$\checkmark$] Explanation of the key techniques
  used in the implementation of {\REs}
  (Section~\ref{sec:implementation}).
\item[$\checkmark$] Extensive evaluation using a popular benchmarking tool
  that ranks {\REs} as fastest among all regex matchers today.
  All the evaluation results are explained in detail.
  (Section~\ref{sec:evaluation}).
\end{itemize}

We start with some motivating examples for the extended operators that
demonstrate how {\REs} can be used to match patterns that are
currently either very difficult or infeasible to express with standard
regular expressions.

\section{Motivating Examples}
\label{sec:examples}

In this section we explain the intuition behind the \emph{intersection}
($\rand$) and \emph{complement} ($\rnot$) operators in {\REs} and
illustrate their use through some examples. The examples are also
available in the accompanying web application~\cite{supplementary} and
are written essentially in {.NET} regex syntax, with the addition of
$\rand$ and $\rnot$, as well as the \emph{wildcard} $\all$ that matches
\emph{all strings}, equivalently represented by \texttt{(.|\bslash{n})\st}, where
dot denotes all characters \emph{except} the newline character (\bslash{n}).

\begin{table}[pb]
	{\small
		\vspace{-1em}
		\caption{Basic constructs and their meaning in the extended regex syntax in {\REs}. }
		\label{tab:basic-constructs}
		\vspace{-1em}
		\begin{tabular}{l@{\,:\,}l|l@{\,:\,}l|l@{\,:\,}l}
			\multicolumn{2}{l|}{\textit{Lookarounds}} &
			\multicolumn{2}{l|}{\textit{Prefixes/Suffixes}} &
			\multicolumn{2}{l}{\textit{Other}} \\
			\hline
			$\lb{R}\all$ & preceded by $R$ &
			$R\all$ & starts with $R$ &
			$\all R\all$ & contains $R$ \\
			$\lbneg{R}\all$ & not preceded by $R$ &
			$\rnot(R\all)$ & does not start with $R$ &
			$\rnot(\all R\all)$ & does not contain $R$ \\
			$\all\la{R}$ & followed by $R$ &
			$\all R$ & ends with R &
			$R\alt S$ & either $R$ or $S$ \\
			$\all\laneg{R}$ & not followed by $R$ &
			$\rnot(\all R)$ & does not end with $R$ &
			$R\rand S$ & both $R$ and $S$ \\
	\end{tabular}}
\end{table}    
\begin{table}[pb]
	{\small
		\vspace{-1em}
		\caption{Real-world constraints expressed as regexes in {\REs}.}
		\label{tab:constraint-specification}
		\vspace{-1em}
		\begin{tabular}{l|l|l}
			\textit{Real-world constraint} &
			\textit{Regex equivalent} &
			\textit{Notes}
			\\
			\hline
			a line with an email & $\texttt{\caret{}.\st{}@.\st\dollar{}}$ 
			& $\texttt{\caret{}}\eqdef\lb{\sanchor\alt\bslash{n}}$ and
			$\texttt{\dollar{}}\eqdef\la{\eanchor\alt\bslash{n}}$
			\\
			in the \texttt{Valid} section & 
			\texttt{$\lb{\texttt{Valid}\rnot\texttt{(}\all{}\texttt{Invalid}\all{}\texttt{)}}\all{}$}  
			& see Example~\ref{ex:context-aware-matching} \\
			without a subdomain &
			\texttt{$.\st{}@\rnot\texttt{(}\RECount{\texttt{(}\all\bslash{}.\all\texttt{)}}{2}\texttt{)}$} 
			& domain not containing two dots (\texttt{\bslash{}.}) \\
			not from \texttt{@other.com} 
			& $\rnot\texttt{(\all{}@other.com)}$
			& match not ending with \texttt{@other.com} \\       
	\end{tabular}}
\end{table} 

The full syntax and semantics is explained in detail in
Section~\ref{sec:RE}.  The overall intuition is, first of all, that
matching of a regex $R$ is relative to a \emph{substring} of the input
string.  In particular, the wildcard {\all} matches all substrings of any input.
For example, if the string is \texttt{" HelloWorld\bslash{}n"} then the regex
\texttt{\lb{\bslash{s}}\all$\la{\bslash{s}}$} matches only
\texttt{"HelloWorld"} in it, because it is the only substring surrounded by
white space characters.  The regex \texttt{e\all(?=\bslash{s})} matches the
substring \texttt{"elloWorld"}, while
\texttt{\all{}e\all(?=\bslash{s})} matches the substring \texttt{" HelloWorld"}.

For the upcoming examples, consult Table~\ref{tab:basic-constructs}
for intuition on how to interpret the various constructs.

\begin{ex}[Context-aware matching]
\label{ex:context-aware-matching}

Consider the input text in the first column below
\[
\begin{small}
  \begin{tabular}{c|c|c}
\begin{tt}
\begin{tabular}{l}
--- Valid\\
email@foo.com\\
email@subdomain.foo.com\\
email@other.com\\
--- Invalid\\
email@-foo.com\\
email@foo@foo.com\\
\\
\\
\end{tabular}
\end{tt}
&
\begin{tt}
\begin{tabular}{l}
--- Valid\\
\colorbox{yellow!50}{email@foo.com}\\
\colorbox{yellow!50}{email@subdomain.foo.com}\\
\colorbox{yellow!50}{email@other.com}\\
--- Invalid\\
\colorbox{yellow!50}{email@-foo.com}\\
\colorbox{yellow!50}{email@foo@foo.com}\\
--- Valid\\
--- Invalid
\end{tabular}
\end{tt}
&
\begin{tt}
\begin{tabular}{l}
--- Valid\\
\colorbox{blue!25}{email@foo.com}\\
\colorbox{green!50}{email@subdomain.foo.com}\\
\colorbox{green!50}{email@other.com}\\
--- Invalid\\
email@-foo.com\\
email@foo@foo.com\\
--- Valid\\
--- Invalid
\end{tabular}
\end{tt}
\end{tabular}
\end{small}
\]
which contains a
list of valid email addresses and invalid email addresses. Our goal is
not to \emph{validate} the email addresses themselves, but to
\emph{extract} all the email addresses from the \texttt{Valid}
section.

Finding the email addresses is easy, it is all lines that contain \texttt{@} 
which is expressed by the regex \texttt{.*@.*}, and since name and domain parts
must be non-empty, we can refine the regex to \texttt{.+@.+}.

The \texttt{Valid} section requirement
is more difficult, as we need a mechanism to distinguish between the two sections. This is where
lookarounds come in handy. We can use a lookbehind to match the \texttt{Valid} section, and a lookahead
to match the \texttt{Invalid} section, this ensures that the matches are between the two sections,
using the regex $\lb{\texttt{Valid}\all}\texttt{.+@.+}\la{\all{}\texttt{Invalid}}$.

However, this regex has a problem. If the input contains multiple \texttt{Valid} sections, the lookbehind
will match the first \texttt{Valid} section
and the lookahead will match the last \texttt{Invalid} section, as highlighted
in the middle column above.
But we want to match only the entries in \texttt{Valid} sections.

What we need is a way to define a window that starts with the \texttt{Valid}
section and ends with the \texttt{Invalid} section.  This is where \emph{complement}
comes in handy. What we really want to match is expressed with the regex
\texttt{$\lb{\texttt{Valid}\rnot\texttt{(}\all{}\texttt{Invalid}\all{}\texttt{)}}$.+@.+}
-- the lookbehind 
requires \texttt{$u\conc$Valid$\conc v$} for some $u$ and $v$
to occur before the match while
prohibits \texttt{Invalid} from occurring in $v$.

We do not need the lookahead because the existence of the
\texttt{Valid} section is enough to ensure that the match is in the
\texttt{Valid} section. The complement then ensures that the \texttt{Invalid} section
has not started yet, which works even if the \texttt{Invalid} section does not
exist, which is exactly what we want, as highlighted by the matches of
this regex in the third column above.
\end{ex}

\begin{ex}[Separation of concerns]
\label{ex:separation-of-concerns}

What if we have more requirements for the email addresses? 
What if we want to exclude email addresses that contain a subdomain.
What if we want to exclude emails from the \texttt{other.com} domain? In real-world applications,
it is common to have multiple requirements for a match, and it is important to be able to
express these requirements in a concise, maintainable way. All of these requirements 
can be expressed as regex constraints, 
as shown in Table~\ref{tab:constraint-specification}
where $\texttt{\caret{}}$ and $\texttt{\dollar{}}$ are called \emph{line anchors}.
The precise specification of what we want to match in this example is the following intersection
of individual constraints:
\[
\underbrace{\texttt{(\caret{}.\st{}@.\st\dollar{})}}_{\textit{single line}}
\rand
\underbrace{\texttt{($\lb{\texttt{Valid}\rnot\texttt{(}\all{}\texttt{Invalid}\all{}\texttt{)}}\all{}$)}}_
{\textit{occurs in the \texttt{Valid} section}}
\rand
\underbrace{\texttt{($.\st{}@\rnot\texttt{((}\all\bslash{}.\all\texttt{)}\texttt{\{2\}}\texttt{)}$)}}_
{\textit{without a subdomain}}
\rand
\underbrace{\texttt{($\rnot\texttt{(}\all{}@\texttt{other.com}\texttt{)}$)}}_
{\textit{not from \texttt{other.com} domain}}
\]
This regex is easy to read and understand in individual components, 
and can be easily modified to add or remove requirements.
The only match for it in the Example~\ref{ex:context-aware-matching} text is
\colorbox{blue!25}{\texttt{email@foo.com}}.
\end{ex}

\begin{ex}[Extended expressivity but not at the cost of performance]
\label{ex:expressivity-performance}

In industrial applications, regexes with unbounded lookarounds such as
\texttt{$\lb{\texttt{Valid.}\st}$.+@.+} do not exist for a
reason. Unbounded lookbehinds are not supported by many popular
backtracking regex engines, such as PCRE and even the ones that do
support them, such as .NET, cannot match them with good performance
because the engine has to repeatedly backtrack for the context
conditions for every potential match.  This reduces the impact of
having such features available, as the performance is too poor to make
it feasible for use in practice.

This is where {\REs} shines. It is able to match extended
regexes not only in linear time, but with performance that is
comparable to industrial automata based engines, such as RE2, Hyperscan,
and Rust. This is due to the fact that {\REs} is internally also automata
based and does not backtrack.

The addition of intersection and complement does not increase the
complexity of the matching algorithm relative to the input -- the overall
complexity of the engine remains \emph{input-linear}. This
includes not only lookarounds but allows also for more advanced
constructs shown in Table~\ref{tab:adv-constructs}.

\begin{table}[t]
  {\small
   \caption{Advanced constructs in the extended regex syntax of {\REs}.}
   \label{tab:adv-constructs}
   \vspace{-1em}
  \begin{tabular}{@{}l|l|l@{}}
    & \textit{Regex} & \textit{Notes} \\ \hline
    \textit{Difference} &
    $L\nrightarrow R$ &
    $L$ but not $R$ (same as $L\rand \rnot{}R$)
    \\
    \textit{Implies} (\textit{Negated Difference}) &
    $L\rightarrow{}R$ &
    if $L$ then $R$ (same as $\rnot{}L\alt R$)
    \\
    \textit{XOR} (\textit{Symmetric Difference}) &
    $L\xor R$ &
    exactly one of $L$, $R$ (same as $L\rand \rnot{}R\alt \rnot{}L\rand R$)
    \\
    \textit{XNOR} (\textit{Neg.~Symm.~Diff.}) &
    $L\xnor R$ &
    both or none of $L$, $R$ (same as $L\rand R\alt \rnot{}L\rand \rnot{}R$)
    \\
    \textit{Window} &   
    $\lb{L\rnot\texttt{(}\all{}R\all{}\texttt{)}}\all{}$ &
    in a window starting with $L$ but not past $R$
    \\
    \textit{Between} &
    $\lb{L}\rnot\texttt{(}\all{}\texttt{$L$\alt$R$}\all{}\texttt{)}\la{R}$ &
    between $L$ and $R$ without crossing boundaries \\
    \end{tabular}}
\end{table}

Furthermore, lookbehind assertions for context
such as the ones shown in Example~\ref{ex:context-aware-matching}
can be used to find not just the first match, but \emph{all matches} in linear time%
, which
is due to the fact that the engine is able
to locate all the matches in a single pass over the input string. Note 
that certain combinations of regexes and inputs can still have an all-matches search complexity 
of $O(n^2)$ in regex engines that are otherwise guaranteed to be linear for a single match.
Scenarios illustrating this are shown in Sections~\ref{sec:eval-quadratic} and \ref{sec:eval-lookarounds}.

Since other engines do not support lookbehinds with complement,
we are using the regex pattern \texttt{$\lb{\texttt{Valid[\caret{}-]\st}}\texttt{.+@.+}$}
to illustrate this scenario in Figure~\ref{fig:lookbehind-linear}, 
where a new section is known to start with the dash (\texttt{-}) character, which is not 
used anywhere else in the input.

The input is similar to the one used in
Example~\ref{ex:context-aware-matching}, but here the $x$ axis shows
the number of lines with email addresses that the \texttt{Valid} and
\texttt{Invalid} sections contain, and the $y$ axis shows how many
times the engine is slower than \resharp{} that has a
constant throughput in all cases.
\begin{figure}[t]
    \includegraphics*[scale=0.8,trim=0.6cm 0.2cm 1.4cm 0.7cm]{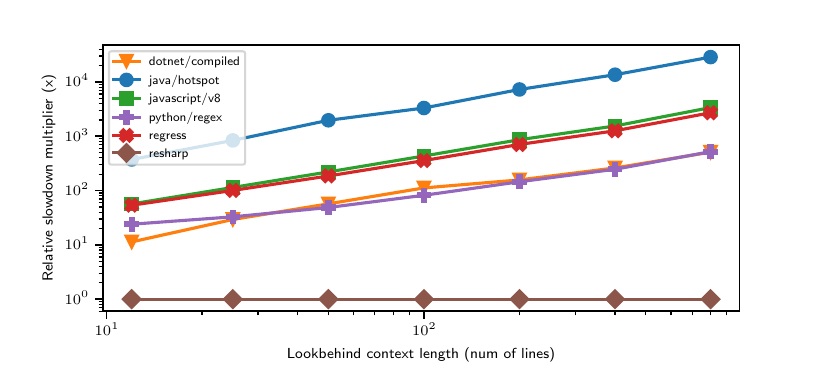}
  \\[-1em]
    \caption{All matches in \emph{linear} time in resharp ({\REs}). Both axes are logarithmic.}
    \label{fig:lookbehind-linear}
\end{figure}
\end{ex}

We now proceed to establish the theory for regular expressions
extended with lookarounds, complement and intersection.
We first recall some background material.

\section{Preliminaries}
\label{sec:preliminaries}

Here we introduce the notation and main concepts used in the paper.
The general meta-notation and notation for denoting strings and
locations follows \cite{PLDI2023} but differs in some minor aspects.
We write $\textit{lhs}\eqdef\textit{rhs}$ to let \textit{lhs} be
\emph{equal by definition} to \textit{rhs}.  Let $\BOOL = \{\FF,\TT\}$
denote Boolean values, let $\pair{x}{y}$ stand for pairs with
$\first{\pair{x}{y}}\eqdef x$ and $\second{\pair{x}{y}}\eqdef y$.

Let $\D$ be a domain of \emph{characters} and let $\Ds$ denote the set
of all strings over $\D$.  We write $\epsilon$ for the empty string.
The length of $s\in\Ds$ is denoted by $|s|$.  Strings of length one
are treated as characters.  Let $i$ and $l$ be nonnegative
integers such that $i+l \leq |s|$.  Then $s_{i,l}$ denotes the
substring of $s$ starting from index $i$ having length $l$, where the
first character has index 0.  In particular $s_{i,0}=\epsilon$.  For
$0\leq i<|s|$ let $s_i \eqdef s_{i,1}$ and let $s_{|s|} \eqdef \epsilon$.
E.g.,~$\str{abcdef}_{1,4}=\str{bcde}$
and $\str{abcde}_{5}=\epsilon$.

We let $\REV{s}$ denote the
\emph{reverse} of $s$, i.e., $\REV{s}_i = s_{|s|-1-i}$ for $0\leq i <
|s|$.

Let $s\in\Ds$.  A \emph{location in $s$} is a pair
$\loc{s}{i}\eqdef\pair{s}{i}$, where $0\leq i \leq |s|$, where
$\loc{s}{0}$ is called \emph{initial} and $\loc{s}{|s|}$ \emph{final}.
The set of all locations in $s$ is $\LU[s]$ and
$\LU \eqdef \bigcup_{s\in\Ds}\LU[s]$.
$\LUNF$ stands for all the \emph{nonfinal} locations.
For $\loc{s}{i}\in\LUNF$ let $\loc{s}{i}+1 \eqdef \loc{s}{i+1}$.
Let also $\head{\loc{s}{i}}\eqdef s_i$, i.e.,
if $x$ is nonfinal then $\head{x}$ is the
next(current) character of the location.

The \emph{reverse} $\REV{\loc{s}{i}}$ of a location $\loc{s}{i}$ in
$s$ is the location $\loc{\REV{s}}{|s|{-}i}$ in $\REV{s}$. For
example, the reverse of the final location in $s$ is the initial
location in $\REV{s}$.

\paragraph{Effective Boolean
Algebra}  The tuple $ \A=(\D, \Psi, \den{\_},
\bot, \top, \vee, \wedge, \neg) $ is called an \emph{Effective Boolean
Algebra over $\D$} or \emph{EBA} \cite{DV21} where $\Psi$ is a set of
\emph{predicates} that is closed under the Boolean connectives;
$\den{\_} : \Psi \rightarrow 2^{\D}$ is a \emph{denotation function};
$\bot, \top \in \Psi$; $\den{\bot} = \emptyset$, $\den{\top} = \D$,
and for all $\varphi, \psi \in \Psi$, $\den{\varphi \vee \psi} =
\den{\varphi} \cup \den{\psi}$, $\den{\varphi \wedge \psi} =
\den{\varphi} \cap \den{\psi}$, and $\den{\neg \varphi} = \D \setminus
\den{\varphi}$.  Two predicates $\phi$ and $\psi$ are
\emph{equivalent} when $\den{\phi}=\den{\psi}$, denoted by
$\phi\equiv\psi$.  If $\varphi\nequiv\bot$ then $\varphi$ is
\emph{satisfiable}.%

In examples $\D$ stands for the standard 16-bit character set of
Unicode (also known as the \emph{Basic Multilingual Plane} or \emph{Plane 0}) and use
the {.NET} syntax~\cite{CSharpRegexRef} of regular expression
character classes.  E.g., {\bslash{w}} denotes all the
\emph{word-letters}, $\den{\bslash{W}}=\den{\lnot\bslash{w}}$,
\texttt{[0-9]} denotes all the \emph{Latin
numerals}, \bslash{d} denotes all the \emph{decimal digits},
and $\den{\texttt{.}}=\den{\lnot\bslash{n}}$ (i.e., all characters other than the
newline character).\footnote{Note that in {.NET} $\den{\texttt{[0-9]}}\subsetneq\den{\bslash{d}}$
while for example in \emph{JavaScript} $\den{\texttt{[0-9]}}=\den{\bslash{d}}$.}
$\anychar$ corresponds in {\REs} to  \texttt{[\bslash{s}\bslash{S}]} and
$\bot$ corresponds to \texttt{[0-[0]]}.

\section{Regexes with Lookarounds and Location Derivatives}
\label{sec:RE}
Here we formally define regexes supported in {\resharp}.  Regexes are
defined modulo a character theory $\A=(\D, \Psi, \den{\_}, \bot,\anychar,
\vee, \wedge, \neg)$ that we illustrate with standard (.NET Regex)
character classes in examples while the actual representation of
character classes in $\Psi$ is immaterial and $\D$ may even be
\emph{infinite}.  $\A$ that is used for character classes in {.NET} is
a \emph{$K$-bit bitvector algebra}~\cite[Section~5.1]{PLDI2023} using
\emph{mintermization} for compression.  In
most cases $K\leq 64$ and $\Psi$ represents predicates using
unsigned 64-bit integers or \texttt{UInt64} where all the Boolean
operations are essentially $O(1)$ operations: bitwise-AND, bitwise-OR,
and bitwise-NOT, with $\bot=0$.
Here $\all\eqdef\anychar\st$.

After the definition of {\REs} regexes we
define their match semantics, that is based on pairs of locations
called \emph{spans} -- the intuition is that a span
$\pair{\loc{s}{i}}{\loc{s}{j}}$, where $i\leq j$, provides a match in
$s$ where the matching substring is $s_{i,j-i}$.
Thereafter we
formally define \emph{derivatives} for {\resharp},
develop the main matching algorithm for {\resharp} and
prove its correctness and input linearity.
Some of the results use theorems in~\cite{Zhu24} of the more general
theory $\RE$ that we define first.

\subsection{Full Class $\RE$ of Regexes}
The class $\RE$ of regexes is defined as follows.
Members of $\RE$ are denoted here by $\mathbf{R}$.
\emph{Concatenation} ($\conc$)
is often implicit by juxtaposition.  All operators appear in
order of precedence where \emph{union} ($\alt$) binds weakest and
\emph{complement} ($\rnot$) binds strongest.
Let $\psi\in\Psi$ and let $m$ be a positive integer.
\begin{eqnarray*}
\mathbf{R} &::= &
\psi {\;\mid\;}
\eps {\;\mid\;}
\mathbf{R}_1 \alt \mathbf{R}_2 {\;\mid\;}
\mathbf{R}_1 \rand \mathbf{R}_2 {\;\mid\;}
\mathbf{R}_1 \conc \mathbf{R}_2 {\;\mid\;}
\RECount{\mathbf{R}}{m} {\;\mid\;}
\mathbf{R}\st {\;\mid\;}
\rnot \mathbf{R} {\;\mid\;}
\lb{\mathbf{R}} {\;\mid\;}
\lbneg{\mathbf{R}} {\;\mid\;}
\la{\mathbf{R}} {\;\mid\;}
\laneg{\mathbf{R}}
\end{eqnarray*}
We also write \verb+()+ for the \emph{empty word} regex $\eps$.
The regex denoting \emph{nothing} is just the predicate $\bot$.
We let
$\RECount{\mathbf{R}}{0}\eqdef\eps$ for convenience.
In reality also $\RECount{\mathbf{R}}{1}\eqdef \mathbf{R}$.
We write $R\plus$ for $R\conc R\st$.

The regexes
$\la{\mathbf{R}}$, $\laneg{\mathbf{R}}$,
$\lb{\mathbf{R}}$, and $\lbneg{\mathbf{R}}$ are called \emph{lookarounds};
$\la{\mathbf{R}}$ is \emph{(positive) lookahead},
$\laneg{\mathbf{R}}$ is \emph{negative lookahead},
$\lb{\mathbf{R}}$ is \emph{(positive) lookbehind},
and $\lbneg{\mathbf{R}}$ is \emph{negative lookbehind}.
In the context of $\RE$ let
$
\sanchor\eqdef\lbneg{\anychar}
$
and
$
\eanchor\eqdef\laneg{\anychar}
$.

\subsection{Regexes Supported in \resharp}
\label{sec:RE-def}

\begin{wrapfigure}{c}{1.15in}
  \begin{math}
  \begin{array}{@{}c@{}} \\[-3em]
    \includegraphics[scale=0.5]{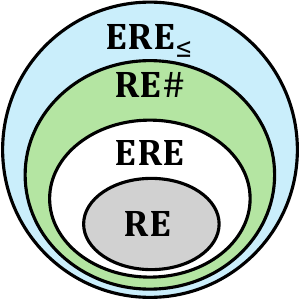} \\[-1em]
    \end{array}
  \end{math}
  \caption{${\REs}\subsetneq{\RE}$}
  \vspace{-2em}
\end{wrapfigure}
Here we introduce the abstract syntax of regular expressions
$R$ that are supported by {\resharp} that we also denote by \REs. First,
regexes \emph{without lookarounds} are below denoted by $E$ and the
corresponding subclass is denoted by \REc.
Regexes $R$ then extend $E$ with lookarounds and are closed
under intersection.
Let {\REstd} stand for the \emph{standard} subset of
{\REc} without $\rnot$ and $\rand$.
\[
\hspace{-8em}
\begin{array}{@{}c@{\;\;}c@{\;\;}l@{}}
E &::= &
\sanchor {\;\;\mid\;\;} \eanchor {\;\;\mid\;\;}
\psi {\;\;\mid\;\;}
\eps {\;\;\mid\;\;}
E_1 \alt E_2 {\;\;\mid\;\;}
E_1 \rand E_2 {\;\;\mid\;\;}
E_1 \conc E_2 {\;\;\mid\;\;}
\RECount{E}{m} {\;\;\mid\;\;}
E\st {\;\;\mid\;\;}
\rnot E
\\[.2em]
R  &::=&
E {\;\;\mid\;\;}
R_1 \rand R_2 {\;\;\mid\;\;}
\lb{E}\conc R {\;\;\mid\;\;}
\lbneg{E}\conc R {\;\;\mid\;\;}
R\conc\la{E} {\;\;\mid\;\;}
R\conc\laneg{E}
\end{array}
\]
The regex $\sanchor$ is called the \emph{start anchor} and $\eanchor$
is called the \emph{end anchor}.  While all other standard anchors
supported in {.NET} are also
supported in the concrete syntax of {\resharp}, they are defined via lookarounds
and (currently) disallowed in $\REc$.
For example, the \emph{line anchors} \verb+^+ and \verb+$+
are defined in Table~\ref{tab:constraint-specification}.
The presence of $\sanchor$ and $\eanchor$ as \emph{primitive} regexes
in {\REc} is used in the proof of Theorem~\ref{thm:NF}.

Let also
$\RELoop{E}{m}{n}\eqdef\RECount{E}{m}{\conc}\RECount{(E{\alt}\eps)}{n-m}$,
where $0\leq m\leq n$, as 
the \emph{bounded loop} with \emph{lower bound} $m$ and \emph{upper bound} $n$.
In fact, the implementation in {\REs} uses
$\RELoop{E}{m}{n}$ as the \emph{core} construct, where
$m=0$ is an important special case during rewrites, and
$\RECount{E}{m}\eqdef\RELoop{E}{m}{m}$ and
$E\st\eqdef\RELoop{E}{0}{\infty}$.

The core intuition of the difference between $\RE$ and \resharp~lies
in that in \resharp~lookbacks are only allowed to match context
\emph{before} the actual match and lookaheads \emph{after}
it. Intersection can be used to express behavior similar to
lookarounds \emph{within} a match.  For example, the regex
\verb+(?=.*[a-z])\w*+ turns into
\verb+.*[a-z].*&\w*+ -- the \emph{password filter} example in the
introduction is similar.

\subsection{Match Semantics}
\label{sec:Semantics}
The match semantics of regexes in $\RE$ uses \emph{spans}.
A \emph{span} in a string $s$ is formally a pair of locations
$\s=\pair{\loc{s}{i}}{\loc{s}{j}}$ where $i\leq j$; the
\emph{width} of $\s$ is $\width{\s}\eqdef j-i$. We call
$\first{\s}$ the \emph{start location of $\s$} and $\second{\s}$ the
\emph{end location of $\s$}.  If $\width{\s}=0$ then
$\first{\s}=\second{\s}$ is also called the \emph{location of
$\s$}.  The set of all spans in $s$ is denoted by $\SU[s]$ and
$\SU\eqdef\bigcup_{s\in\Ds}\SU[s]$.
For all $\s\in\SU$ and $R\in\RE$, \emph{$\s$ is a match of $R$} or
\emph{$\s$ models $R$} is denoted by $\s\models R$:
\[
\begin{array}{@{}r@{\,}c@{\,}l@{\;\;}c@{\;\;}l@{}}
  \s &\models&\eps &\eqdef&  \isempty{\s} \\
  \s &\models& \psi &\eqdef&  \width{\s}\,{=}\,1 \band \head{\first{\s}}\in\den{\psi} \\
  \s&\models& L\alt R &\eqdef& \s\,{\models}\, L \bor \s\,{\models}\, R\\
  \s&\models& L\rand R &\eqdef& \s\,{\models}\, L \band \s\,{\models}\, R \\
  \s&\models& \rnot R &\eqdef& \s\not\models R \\
  \s&\models& R\st &\eqdef& \exists\, m\geq 0: \s\,{\models}\, \RECount{R}{m}
\end{array}
\;
\begin{array}{@{}r@{\,}c@{\,}l@{\;\;}c@{\;\;}l@{}}
  \s &\models& L \conc R &\eqdef&
  \exists\,x:
  \pair{\first{\s}}{x}\,{\models}\, L \band 
  \pair{x}{\second{\s}}\,{\models}\, R
    \\
  \quad\s&\models& \RECount{R}{m} &\eqdef&
  \exists\, x: \pair{\first{\s}}{x}\,{\models}\, R
  \band \pair{x}{\second{\s}}\,{\models}\, \RECount{R}{m{-}1}
  \\
\s &\models&{\la{R}} &\eqdef&  \isempty{\s} \band
\exists\, x: \pair{\first{\s}}{x}\,{\models}\, R
\\
\s &\models&{\laneg{R}} &\eqdef&  \isempty{\s} \band
\nexists\, x: \pair{\first{\s}}{x}\,{\models}\, R
\\
\s &\models&{\lb{R}} &\eqdef&  \isempty{\s} \band
\exists\, x: \pair{x}{\second{\s}}\,{\models}\, R
\\
\s &\models&{\lbneg{R}} &\eqdef&  \isempty{\s} \band
\nexists\, x: \pair{x}{\second{\s}}\,{\models}\, R
\end{array}
\]
Intuitively, $\s\models \la{R}$ means that there exists a match of $R$
\emph{starting} from the location of $\s$, and $\s\models\lb{R}$ means
that there exists a match of $R$ \emph{ending} in the location of
$\s$. For
any \emph{location} $x$, we write $x\models R$ for $\pair{x}{x}\models R$.
For the {\REc} anchors $\sanchor$ and $\eanchor$
above we have thus that
\[
  x\models\sanchor \biff \IsInitial{x} \qquad
  x\models\eanchor \biff  \IsFinal{x} 
  \]
  Let $B$ be a (positive) lookbehind and let $A$ be a (positive) lookahead.
  Then it follows via the match semantics of concatenation that
\[
\s\models B\conc R \biff
\first{\s}\models B \band \s\models R
\qquad
\s\models R\conc A \biff
\second{\s}\models A \band \s\models R
\]
\begin{ex}
  Consider the first part $\texttt{(?<=author.*).*}$ of the author
  search regex from the introduction.  Then
  $\s\models\texttt{(?<=author.*).*}$ implies that
  $\first{\s}\models{\texttt{(?<=author.*)}}$ and
  $\s\models{\texttt{.*}}$. So
  the \emph{start} location $\first{\s}$ of the match must be after the string \texttt{"author"}
  and the matched substring itself must be on a single line (recall that
  $\den{\texttt{.}}=\Sigma\setminus\{\bslash{n}\}$).
\end{ex}

\begin{ex}
As a simple but nontrivial example of various negations,
we compare the regex $\lbneg{\bslash{w}}$ with the regex $\lb{\lnot\bslash{w}}$
(or $\lb{\bslash{W}}$).
Let $s=\texttt{"a@b"}$. Then $\loc{s}{0}\models \lbneg{\bslash{w}}$
because no word-letter precedes the initial location, but
$\loc{s}{0}\not\models \lb{\lnot\bslash{w}}$ because
no non-word-letter precedes the initial location.
On the other hand both $\loc{s}{2}\models \lbneg{\bslash{w}}$
and $\loc{s}{2}\models \lb{\lnot\bslash{w}}$.
\end{ex}

For $R\in\RE$ let $\LANGRE{R} \eqdef \{\s\in\SU\mid \s\models R\}$
and for $R,S\in\RE$,
$R\equiv S \eqdef \LANGRE{R}=\LANGRE{S}$.
The implementation in {\REs} uses the following key property
for normalisation of regexes
in $\REs$. It also shows the key role that the
start and end anchors play.
We call the below normal form of $R\in\REs$ the \emph{Lookaround Normal Form} of $R$
and denote it by $\NF{R}$ and the LNF of $\REs$ by $\NF{\REs}$.
Construction of $\NF{R}$ is itself linear in the size of $R$.

\begin{thm}[LNF]
  \label{thm:NF}
  For all $R\in\REs$ there exist
  $A,B,E\in\REc$ such that
  $R\equiv \lb{B}\conc E \conc \la{A}$.
\end{thm}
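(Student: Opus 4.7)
The plan is to prove the theorem by structural induction on $R\in\REs$, following the grammar in Section~\ref{sec:RE-def}. The base case is $R=E\in\REc$: here I would take $B=A=\eps$, since by the match semantics both $\lb{\eps}\equiv\eps$ and $\la{\eps}\equiv\eps$ (the empty span trivially witnesses the existential), so $\lb{\eps}\conc E\conc\la{\eps}\equiv E$.

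Before turning to the inductive cases I would establish two combination lemmas. The first is that for any $B_1,B_2\in\REc$ and any location $x$,
\[
x\models\lb{B_1}\band x\models\lb{B_2}\;\;\biff\;\;x\models\lb{(\all\conc B_1)\rand(\all\conc B_2)}.
\]
The forward direction uses the initial location of the ambient string as the witness for the existential hidden in $\lb{\cdot}$ and relies on $\all$ to absorb the prefix up to where each $B_i$ actually begins. The backward direction recovers witnesses for each $B_i$ as suffixes of the common prefix. A symmetric lemma holds for lookaheads with $(A_1\conc\all)\rand(A_2\conc\all)$. Both combined expressions stay inside $\REc$ since $\REc$ contains $\all$ and is closed under $\rand$ and $\conc$.

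With these lemmas in hand each positive inductive case reduces to bookkeeping. For $R=\lb{E_0}\conc R'$ I would use the semantic identity $\s\models\lb{E_0}\conc R'\biff \first{\s}\models\lb{E_0}\band \s\models R'$, apply the IH to $R'$ to obtain $\lb{B'}\conc E'\conc\la{A'}$, and merge $\lb{E_0}$ with $\lb{B'}$ via the first combination lemma. The case $R=R'\conc\la{E_0}$ is symmetric. For $R=R_1\rand R_2$, the IH gives LNFs for both; the three parts combine using $E_1\rand E_2$ in the middle and the two boundary lemmas at the ends.

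The main obstacle will be the negative lookaround cases $\lbneg{E_0}\conc R'$ and $R'\conc\laneg{E_0}$, because the LNF only permits \emph{positive} lookbehinds and lookaheads. A naive rewrite of $\lbneg{E}$ as $\lb{\rnot(\all\conc E)}$ fails: the existential in $\lb{\cdot}$ makes the condition trivially true whenever some small span happens not to match $\all\conc E$ (for instance an empty span when $E$ is nonempty). The fix, and the step that crucially uses $\sanchor\in\REc$ as a primitive (as the theorem already hints), is
\[
x\models\lbneg{E}\;\;\biff\;\;x\models\lb{\sanchor\conc\rnot(\all\conc E)}.
\]
The $\sanchor$ forces the witness of the existential to be the initial location, so the embedded complement is evaluated on the full prefix up to $x$, which is exactly the condition in the semantics of $\lbneg{E}$. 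A symmetric rewrite handles $\laneg{E}$ using $\eanchor$ and $\rnot(E\conc\all)$. After these rewrites the negative cases reduce to the already handled positive ones, closing the induction.
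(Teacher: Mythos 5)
Your proposal is correct and follows essentially the same route as the paper: the same positive encodings $\lbneg{E}\equiv\lb{\sanchor\conc\rnot(\all\conc E)}$ and $\laneg{E}\equiv\la{\rnot(E\conc\all)\conc\eanchor}$ (which the paper imports from its cited Lean formalization), the same $(\all\conc B_1)\rand(\all\conc B_2)$ merging of adjacent lookbehinds (dually for lookaheads), and the same distribution of $\rand$ over the three LNF components -- the paper merely presents these as a rewriting process rather than an explicit structural induction. Your explanation of why the naive $\lb{\rnot(\all\conc E)}$ encoding fails, and of the anchors' role in forcing the existential witness to the string boundary, matches exactly the reason the paper keeps $\sanchor$ and $\eanchor$ as primitives of $\REc$.
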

\begin{proof}
  First, by using~\cite[Theorem~5]{Zhu24}, for all $S\in\RE$,
  $\laneg{S}\equiv\la{\rnot\texttt{(}S\conc\all\texttt{)}\conc\eanchor}$
  and
  $\lbneg{S}\equiv\lb{\sanchor\conc\rnot\texttt{(}\all\conc S\texttt{)}}$.
  If $S$ is in $\REc$ then so are
  $\rnot\texttt{(}S\conc\all\texttt{)}\conc\eanchor$ and
  $\sanchor\conc\rnot\texttt{(}\all\conc S\texttt{)}$.
  \emph{We thus replace all the negative lookarounds by positive ones in $R$.}
    
  Any concatenation of two lookaheads $\la{A_1}\conc\la{A_2}$ is in
  $\RE$ equivalent to the single lookahead
  $\la{(A_1\conc\all)\rand(A_2\conc\all)}$ where if
  $A_1,A_2\in\REc$ then so is
  $(A_1\conc\all)\rand(A_2\conc\all)$.  Analogously, for the case
  of lookbehinds,
  $\lb{B_1}\conc\lb{B_2}\equiv\lb{(\all\conc B_1)\rand(\all\conc B_2)}$.

  Any intersection $\lb{B_1}\conc E_1\conc\la{A_1}\rand \lb{B_2}\conc E_2\conc\la{A_2}$
  is in $\RE$ equivalent to the intersection
  $\lb{B_1}\conc\lb{B_2}\conc (E_1\rand E_2)\conc\la{A_1}\conc\la{A_2}$
  because, according to the formal semantics,
  \begin{align*}
&\pair{x}{y} \models \lb{B_1}\conc E_1\conc\la{A_1}\rand \lb{B_2}\conc E_2\conc\la{A_2} \\
    &\biff \pair{x}{y}{\models} \lb{B_1}\conc E_1\conc\la{A_1}\band
           \pair{x}{y}{\models} \lb{B_2}\conc E_2\conc\la{A_2}\\
           &\biff x{\models}\lb{B_1} \band \pair{x}{y}{\models} E_1 \band y{\models} \la{A_1} \band
           x{\models} \lb{B_2} \band \pair{x}{y}{\models} E_2 \band y{\models} \la{A_2}
           \\
           &\biff x{\models}\lb{B_1}{\conc}\lb{B_2}
           \band \pair{x}{y}{\models} E_1\rand E_2
           \band y{\models} \la{A_1}{\conc}\la{A_2}
  \end{align*}
  We thus arrive at the lookaround normal form, by applying these rewrites.
\end{proof}
\begin{ex}
  Consider the author search regex from the introduction where
  the word border \bslash{b} \emph{before} \bslash{w}  corresponds to
  the negative lookback $\lbneg{\bslash{w}}$ and
  \bslash{b} \emph{after} \bslash{w} corresponds to
  the negative lookahead $\laneg{\bslash{w}}$.
  After normalization the negative lookarounds
  have been replaced by the equivalent positive ones:
  \[
  \underbrace{\lb{\all\texttt{author.*}\rand\all\texttt{\bslash{A}\char`~($\anychar$\st\bslash{w})}}}_{\textit{lookbehind}}
  \conc
  \underbrace{\texttt{(}\texttt{.*}\rand\texttt{\char`~(.*and.*)}\rand\bslash{w}\texttt{.*}\bslash{w}\texttt{)}}_{\textit{main pattern}}
  \conc
  \underbrace{\texttt{(?=\char`~(\bslash{w}$\all$)\bslash{z})}}_{\textit{lookahead}}
  \]
  This regex is pretty much humanly unreadable
  and is only intended for internal processing by the matcher.
  Among several other simplifications, an immediate simplification that is applied here is that
  $\texttt{(?=\char`~($\psi\all$)\bslash{z})}\equiv
  \texttt{(?=$\lnot\psi$\alt\eanchor)}$ for all $\psi\in\Psi$
  that gets rid of $\rnot$ and $\all$.
\end{ex}

\subsection{Reversal}
Reversal of $R\in\RE$, denoted by $\REV{R}$, is defined as follows:
\[
\begin{array}{r@{\;}c@{\;}l}
  \REV{\psi}&\eqdef&\psi \\
  \REV{\eps}&\eqdef&\eps \\
  \REV{R\st}&\eqdef&\REV{R}\st
\end{array}
\quad
\begin{array}{r@{\;}c@{\;}l}
  \REV{(R \alt S)}&\eqdef&\REV{R} \alt \REV{S} \\
  \REV{(R \rand S)}&\eqdef&\REV{R} \rand \REV{S} \\
  \REV{(\rnot R)}&\eqdef&\rnot(\REV{R})
\end{array}
\quad
\begin{array}{r@{\;}c@{\;}l}
  \REV{(R \conc S)}&\eqdef&\REV{S} \conc \REV{R} \\
  \REV{\RECount{R}{m}}&\eqdef&\RECount{\REV{R}}{m} \\
  \REV{\la{R}} &\eqdef& \lb{\REV{R}} 
\end{array}
\quad
\begin{array}{r@{\;}c@{\;}l}
  \REV{\lb{R}} &\eqdef& \la{\REV{R}}\\
  \REV{\laneg{R}} &\eqdef& \lbneg{\REV{R}} \\
  \REV{\lbneg{R}} &\eqdef& \laneg{\REV{R}}
\end{array}
\]
Reversal is used in the definition of the top-level matching algorithm, and is
therefore a critical operation of the overall framework.  It follows
by induction over regexes that reversal is both size-preserving
and involutive: $\REV{(\REV{R})}=R$.

The reverse of a span $\s\in\SU$ is defined as the span
$\REV{\s}\eqdef \pair{\REV{(\second{\s})}}{\REV{(\first{\s})}}$, that is
also an involutive and width-preserving operation.  It follows also
that $\s\in\SU[s]\biff\REV{\s}\in\SU[\REV{s}]$.  We make use of the
following theorem that has also been formally proved
correct~\cite[Theorem~1]{Zhu24} in the \emph{Lean} proof assistant.
Observe also that {\REs} is \emph{closed under reversal}.

\begin{thm}[Reversal]
\label{thm:REV}
$\forall\,R\in\RE,\s\in\SU: \s\models R \biff \REV{\s}\models\REV{R}$.
\end{thm}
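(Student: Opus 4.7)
The plan is to proceed by structural induction on $R\in\RE$, leveraging the definitional clauses of $\REV{\cdot}$ on regexes together with a handful of span-level reversal facts that I would establish as preliminaries: reversal of locations and spans is involutive and width-preserving; the endpoints swap, i.e.\ $\first{\REV{\s}}=\REV{(\second{\s})}$ and $\second{\REV{\s}}=\REV{(\first{\s})}$; and for a width-one span the next character is preserved, $\head{\first{\REV{\s}}}=\head{\first{\s}}$, which follows from the definition $\REV{s}_i=s_{|s|-1-i}$. In particular, $\isempty{\s}\biff\isempty{\REV{\s}}$.

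The base cases $\psi$ and $\eps$ reduce directly to these span-level facts because $\REV{\psi}=\psi$ and $\REV{\eps}=\eps$. The Boolean cases $\alt$, $\rand$, and $\rnot$ are immediate from the induction hypothesis since $\REV{\cdot}$ distributes over each and the semantics of these operators is pointwise in the span $\s$; no witness manipulation is needed.

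The core case is concatenation. Suppose $\s\models L\conc R$, so there is a witness $x$ with $\pair{\first{\s}}{x}\models L$ and $\pair{x}{\second{\s}}\models R$. The witness I use on the reversed side is $\REV{x}$: observe that $\REV{\pair{\first{\s}}{x}}=\pair{\REV{x}}{\second{\REV{\s}}}$ and $\REV{\pair{x}{\second{\s}}}=\pair{\first{\REV{\s}}}{\REV{x}}$, so the induction hypothesis applied to $L$ and to $R$ gives $\pair{\first{\REV{\s}}}{\REV{x}}\models\REV{R}$ and $\pair{\REV{x}}{\second{\REV{\s}}}\models\REV{L}$, which by definition of $\models$ on $\conc$ is exactly $\REV{\s}\models\REV{R}\conc\REV{L}=\REV{(L\conc R)}$; the converse direction is symmetric using involutivity. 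The counted-iteration case $\RECount{R}{m}$ is handled by a nested induction on $m$: the base case $m=0$ reduces to $\eps$, and the step unfolds into a concatenation to which the previous argument applies, noting that $\REV{\RECount{R}{m}}=\RECount{\REV{R}}{m}$. The Kleene star case $R\st$ then follows since $\s\models R\st$ iff $\s\models\RECount{R}{m}$ for some $m\geq 0$ and $\REV{(R\st)}=\REV{R}\st$.

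For the lookaround cases I exploit the swap $\REV{\la{R}}=\lb{\REV{R}}$ and $\REV{\lb{R}}=\la{\REV{R}}$ (and dually for negative lookarounds). For $\s\models\la{R}$ we have $\isempty{\s}$ plus a witness $x$ with $\pair{\first{\s}}{x}\models R$; applying the induction hypothesis yields $\pair{\REV{x}}{\second{\REV{\s}}}\models\REV{R}$, and combined with $\isempty{\REV{\s}}$ this is precisely $\REV{\s}\models\lb{\REV{R}}$. The negative lookarounds follow by taking the contrapositive of the same equivalence, using that $\REV{\cdot}$ on spans is a bijection between $\SU[s]$ and $\SU[\REV{s}]$ so that ``no witness exists'' transfers faithfully. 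The main obstacle is purely bookkeeping: aligning the indices so that the reversed witness $\REV{x}$ really does split $\REV{\s}$ into the reverses of the two original pieces, and similarly verifying that $\first$ and $\second$ swap under reversal; once these span-level lemmas are stated up front, every inductive case is a short, routine unfolding.
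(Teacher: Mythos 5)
Your induction is sound and the details check out: the span-level lemmas you isolate (involutivity, width preservation, endpoint swapping, and $\head{\first{\REV{\s}}}=\head{\first{\s}}$ for width-one spans, which indeed follows from $\REV{s}_i=s_{|s|-1-i}$) are exactly what is needed, and the witness transport $x\mapsto\REV{x}$ in the concatenation and lookaround cases is handled correctly, including the order swap $\REV{(L\conc R)}=\REV{R}\conc\REV{L}$ and the lookahead/lookbehind exchange. The comparison with the paper is a little unusual here: the paper does not prove Theorem~\ref{thm:REV} in the text at all, but defers entirely to the mechanized Lean proof of \cite[Theorem~1]{Zhu24}. So what you have written is a self-contained pen-and-paper argument where the paper offers only a citation; your route is the natural structural induction that one would expect to underlie that formalization, and it buys the reader an actual argument at the cost of having to trust the bookkeeping that Lean would otherwise discharge. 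One small point worth making explicit in your $\RECount{R}{m}$ case: the semantics unfolds $\RECount{R}{m}$ from the left as (one copy of $R$) followed by $\RECount{R}{m-1}$, so after reversal your concatenation argument lands you on $\RECount{\REV{R}}{m-1}\conc\REV{R}$, whereas the goal $\RECount{\REV{R}}{m}$ unfolds as $\REV{R}\conc\RECount{\REV{R}}{m-1}$; you need the (easy, but separate) lemma that left- and right-unfoldings of a counted repetition are equivalent, proved by another induction on $m$ using associativity of the concatenation semantics. This is routine bookkeeping rather than a gap, but as stated your ``the previous argument applies'' elides it.
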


\subsection{Nullability}

Here we define \emph{nullability} of regexes $R\in{\REc}$.
The definition is more-or-less standard with one key
difference concering the two anchors. In terms of the span based match
semantics, $R$ being \emph{always} nullable means that $R$ is
equivalent to $R|\eps$ and thus $x\models R$ for all
locations $x$, i.e., that $R$ matches the empty word in any context.
Let $x$ be any location and let $\psi\in\Psi$.
\[
\begin{array}{r@{\;\;}c@{\;\;}l}
\IsNullable[x]{R\alt S} &\eqdef& \IsNullable[x]{R} \bor  \IsNullable[x]{S}\\
\IsNullable[x]{R\rand S} &\eqdef& \IsNullable[x]{R} \band  \IsNullable[x]{S}\\
\IsNullable[x]{R\conc S} &\eqdef& \IsNullable[x]{R} \band  \IsNullable[x]{S}\\
\IsNullable[x]{\RECount{R}{m}} &\eqdef& \IsNullable[x]{R}\\
\IsNullable[x]{\rnot R} &\eqdef& \bnot\, \IsNullable[x]{R}
\end{array}
\quad
\begin{array}{r@{\;\;}c@{\;\;}l}
\IsNullable[x]{\sanchor} &\eqdef& \IsInitial{x} \\
\IsNullable[x]{\eanchor} &\eqdef& \IsFinal{x} \\
\IsNullable[x]{\eps} &\eqdef& \TT \\
\IsNullable[x]{R\st} &\eqdef& \TT \\
\IsNullable[x]{\psi} &\eqdef&\FF 
\end{array}
\]
If a regex $R$ in a lookaround $\la{R}$ or $\lb{R}$ is always nullable
then the lookaround is simplified to $\eps$. Such nullability status
is maintained with each regex AST node at construction time. For
example, the regex \verb+\n\z|\z+ is only nullable in a \emph{final}
location.  The lookahead \verb+(?=\n\z|\z)+ corresponds to the
\bslash{Z} anchor in {.NET} and is also supported in the concrete syntax of {\REs}.

\subsection{Derivatives in {\REc}}
\label{sec:derivatives}

Here we first define \emph{derivatives} of regexes in {\REc}.
This definition is also more-or-less standard.
Let $x\in\LUNF$ be a \emph{nonfinal} location, in which case we know that $\head{x}\in\D$.
For example, if $s=\str{ab}$ then the nonfinal locations in $s$ are
$\loc{s}{0}$ and $\loc{s}{1}$.  Let $\psi\in\Psi$, let $\diamond\in\{\rand,\alt\}$,
and let $\ANCH\in\{\sanchor,\eanchor\}$.
\[
\begin{array}{@{}rcl@{}}
\DER{x}{\ANCH} &\eqdef& \emp \\
\DER{x}{\eps} &\eqdef& \emp \\
\DER{x}{R \diamond S} &\eqdef& \DER{x}{R}\diamond \DER{x}{S} \\
\DER{x}{\rnot R} &\eqdef& \rnot\DER{x}{R} \\
\DER{x}{R\st} &\eqdef& \DER{x}{R}\conc R\st
\\
\end{array}
\quad
\begin{array}{@{}rcl@{}}
\DER{x}{\RECount{R}{m}} & \eqdef & \DER{x}{R}\conc\RECount{R}{m-1}
\\[.2em]
\DER{x}{\psi}    &\eqdef& \ITEBrace{\head{x}\in\den{\psi}}{\eps}{\bot} \\[1em]
\DER{x}{R \conc S} &\eqdef& \ITEBrace{\IsNullable[x]{R}=\TT}{\DER{x}{R}\conc S\alt\DER{x}{S}}{\DER{x}{R} \conc S}
\end{array}
\]
There is one aspect of this definition that deserves attention as it
differs from derivatives of bounded loops in~\cite{PLDI2023}
where $\DER{x}{R\conc R}$ is not always equivalent to
$\DER{x}{R}\conc R$ when $R$ is not always nullable.
One culprit is the \emph{word
border} anchor \bslash{b} that is (currently) not allowed in $\REc$ but is in
{\resharp} defined via lookarounds.
In general, $\DER{x}{R\conc R}$ and
$\DER{x}{R}\conc R$ are always equivalent in $\REs$.

\paragraph{Derivation Relation}
The \emph{derivation relation} $x\DERS{R}y$ between
locations $x,y\in\LU$ and regexes $R\in\REc$ is used to reason about
consequitive derivative steps.  The derivation relation
combines steps so that, e.g., $x\DERS{\eps}x$ and
$x\DERS{R}x{+}2$ means that $\DER{x+1}{\DER{x}{R}}$ is nullable in location $x{+}2$.
\[
 x\DERS{R}y \quad\eqdef\quad
\IsNullable[x]{R}\band x=y
\;\;\bor\;\; \IsNonfinal{x}\band x{+}1\DERS{\DER{x}{R}}y 
\]

\subsection{Adding Lookarounds}

Here we consider $\NF{\resharp}$.
Let
$R=\lb{B}\conc E\conc \la{A}$ where $A,B,E\in\REc$.
We are first
interested in finding the \emph{latest} end location
of a match of $R$, we replace $\lb{B}$ with $\all\conc B$.
Say $D=\all\conc B\conc E$.

The general derivative rule for concatenation in
Section~\ref{sec:derivatives} remains unchanged for concatenation in
$D{\conc}\la{A}$ where it uses the derivative rule for lookaheads as
defined below.  The basic insight for lookaheads is the fact that if
$\la{A}$ was reached there was a nullable location after matching the
regex $D$ before it.  In order to recall the \emph{offsets} to those
locations, lookaheads are \emph{annotated} as $\la[I]{A}$
where $I$ is a set of offsets and $\la{A}=\la[\{0\}]{A}$ where
$0$ is the immediate offset.

The derivative rule for lookahead is as
follows, where $A$ is treated as $\eps$ when nullable, and recall that
$\DER{x}{\eps}=\emp$.
We also let $\la[I]{\bot}\eqdef\bot$ and
$\la[I]{A}\eqdef\eps[I]$ when $A$ is nullable, where
$\eps[I]$ is $\eps$ annotated with $I$.
Let $I+1\eqdef\{i+1\mid i\in I\}$.
\[
\DER{x}{\la[I]{A}}\eqdef
\ITEBrace{\IsNullable[x]{A}}{\bot}{\la[I+1]{\DER{x}{A}}}
\]
Then
$\la[I]{A}\alt\la[J]{A}$ is always rewritten to $\la[I\cup J]{A}$.
So $\eps[I]\alt\eps[J]=\eps[I\cup J]$.
Also $\eps[I]\conc\eps[J]=\eps[I\cup J]$.

\begin{ex}
  Consider the regex $\bslash{d}\plus\la{\texttt{:-}}$ that
  looks for a price in a text
  and let $s=\texttt{"50:- "}$.
  Then $\DER{\loc{s}{0}}{\bslash{d}\plus\la{\texttt{:-}}} =
  \bslash{d}\st\la{\texttt{:-}}$
  and
   $\DER{\loc{s}{1}}{\bslash{d}\st\la{\texttt{:-}}} =
  \bslash{d}\st\la{\texttt{:-}}$
  since the lookahead did not kick in yet.
  Then we get
$
  \DER{\loc{s}{2}}{\bslash{d}\st\la{\texttt{:-}}} =
  \la[\{1\}]{\DER{\loc{s}{2}}{\texttt{:-}}} =
  \la[\{1\}]{\texttt{-}}
  $
  and finally that
  $
  \DER{\loc{s}{3}}{\la[\{1\}]{\texttt{-}}} =
  \la[\{2\}]{\eps} = \eps[\{2\}]
  $ in location $\loc{s}{4}$,
  so the match end is $\loc{s}{4-2}$.
\end{ex}

\subsubsection{Implementation of Lookahead Annotations}
The set $I$ above is represented by a pair $\pair{k}{X}$ containing a
\emph{relative offset} $k$ and an index set $X$ so that $I$ denotes
$\{k+i\mid i\in X\}$ and $I+1 \eqdef \pair{k+1}{X}$; a specialized
union $I \cup J$ is also implemented that adjusts the result to the
lowest relative offset. 

For purposes of DFA state caching, the sets $I$ are only ever compared
with pointer equality and there is a builder to keep track of unique
set instances.  This makes several orders of magnitude difference in
the memory footprint and construction time.  In practice, the sets $I$
are usually sparse which allows the lookahead context to be hundreds
or even thousands of characters long without contributing
significantly to state space.

The complete set $I$ is needed in the generalized algorithm for finding
\emph{all} matches.  In the case when only the \emph{first} match is
searched, $I$ is only ever needed to maintain the \emph{minimal} offset
in it and in this case becomes just that offset. Then, e.g., 
$\la[I]{A}\alt\la[J]{A}$ rewrites to $\la[\min(I,J)]{A}$.

\subsection{Latest Match End}

We consider again regexes in {\REs} in the normal form described earlier
and focus on the simplified and transformed case
$R=\all\conc B\conc E\la{A}$ where $A,B,E\in\REc$.
We search for $j$ such that
\[
\loc{s}{0}\DERS{\all\conc B\conc E} \loc{s}{j} \DERS{A\conc\all} \loc{s}{|s|}
\]
and want to find the \emph{maximal} $j$
if it exists. To this end we use the 
function $\MaxEnd{\loc{s}{i}}{R}{m}$ below
where $\loc{s}{i}$ is the \emph{current location}
and $m$ is the \emph{maximal match end so far}.
We let $\eps[I]\in R$ denote the epsilon
with the annotations that exists (implicitly) in $R$,
e.g., $\eps[\{0,5\}]\in \texttt{a}\st\alt\eps[\{5\}]$ because
$\texttt{a}\st$ contains $\eps$ implicitly.
Initially $m=-1$
and the search starts from the initial location.
We first consider any \emph{nonfinal} location $\loc{s}{i}$, i.e., $i<|s|$.
\[
\MaxEnd{\loc{s}{i}}{R}{m} \eqdef
\left\{
\begin{array}{ll}
  m,& \bif\, R=\emp; \\
  \MaxEnd{\loc{s}{i+1}}{\DER{\loc{s}{i}}{R}}{\max(m,i-\min(I))},& \belse\,\bif\,\eps[I]\in R; \\
\MaxEnd{\loc{s}{i+1}}{\DER{\loc{s}{i}}{R}}{m},& \botherwise.
\end{array}
\right.
\]
The latest match end so far
becomes $\max(m,i-k)$ where $k=\min(I)$ is the
minimal offset from the current index $i$ to where a valid match of $E$
ended when $\loc{s}{i-k}\DERS{A}\loc{s}{i}$.
In particular if $m=-1$ then $i-k$ is the \emph{first} match end that was found.
Later search may reveal other match ends (including
both earlier and later ones) but only the latest one is remembered here.

We now consider the case of the final location in $s$.  In the
following let $\Elimz{R}$ stand for $R$ where {\eanchor} is replaced
by $\eps$. In particular, any lookahead $\la[I]{A}$ such that
$\Elimz{A}$ is nullable is now automatically rewritten to
$\la[I]{\eps}\eqdef\eps[I]$.
\[
\MaxEnd{\loc{s}{|s|}}{R}{m} \eqdef
\left\{
\begin{array}{ll}
  {\max(m,|s|-\min(I))},& \bif\,\eps[I]\in \Elimz{R}; \\
  m,& \botherwise.
\end{array}
\right.
\]
For example if $R=\eanchor\alt\la[\{5\}]{\texttt{a}\st\eanchor}$
then $\Elimz{R}=\eps\alt\eps[\{5\}]=\eps[\{0,5\}]$ and thus
$\max(m,|s|-0)=|s|$.

The following lemma is key in establishing the formal relationship
between the derivation relation for {\REs} and the formal match
semantics. It makes fundamental use of the theory of derivatives of
{\RE} that has recently been fully formalized and proved
correct~\cite{Zhu24} using the \emph{Lean} proof assistant.  The
general derivative theory developed for {\RE} is highly
\emph{nonlinear} for use in practice but subsumes {\REs}, which
enables us to apply the main correctness result of $\RE$
relating the general theory of derivatives in $\RE$
with the formal match semantics.

\begin{lma}
  \label{lma:1}
  Let $\loc{s}{i_0}\in\LU$ and $R\in\NF{\REs}$, and let
  $\MaxEnd{\loc{s}{i_0}}{R}{-1}=j$. Then 
  \begin{enumerate}
  \item $j=-1$ $\biff$ $\nexists\, \imath\geq i_0,\jmath:\pair{\loc{s}{\imath}}{\loc{s}{\jmath}}\models R$.
  \item If $j\geq 0$ then
    $j$ is the maximal $\jmath$ such that
    $\exists\, \imath\geq i_0:\pair{\loc{s}{\imath}}{\loc{s}{\jmath}}\models R$.
  \end{enumerate}
\end{lma}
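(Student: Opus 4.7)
The plan is to prove the lemma by induction on the remaining input length $n = |s| - i_0$, via an invariant that connects the annotated derivative state at each step to the match-end positions witnessed so far. By Theorem~\ref{thm:NF} I may assume $R$ is already in lookaround normal form $\lb{B}\conc E\conc\la{A}$ with $A,B,E\in\REc$; after replacing $\lb{B}$ by the prefix $\all\conc B$ (as done in the setup before the definition of $\MaxEndName$), I work with $D\conc\la{A}$ where $D = \all\conc B\conc E$.

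Let $R_0 = D\conc\la{A}$ and $R_{k+1} = \DER{\loc{s}{i_0+k}}{R_k}$. The central invariant to establish, by induction on $k$ with case analysis on the derivative rules, is that $\eps[I]\in R_k$ iff $I$ is exactly the set of offsets $\ell\geq 0$ such that (i) some $\imath\in[i_0, i_0+k-\ell]$ satisfies $\pair{\loc{s}{\imath}}{\loc{s}{i_0+k-\ell}}\models D$ and (ii) step $k$ is the first step at which the lookahead started at $\loc{s}{i_0+k-\ell}$ becomes nullable, i.e., $\loc{s}{i_0+k-\ell}\DERS{A}\loc{s}{i_0+k}$ with no earlier nullability. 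The invariant is driven by the classical derivative/matching correspondence for $\REc$, which is available here via the correctness theorem for $\RE$ in~\cite{Zhu24} together with Theorem~\ref{thm:NF}: nullability of $\DER{x}{D}$ corresponds to matches of $D$ ending at $x{+}1$. The annotation update $I\mapsto I+1$ records the shift in relative offset as position advances; the rewrite $\la[I]{A}\alt\la[J]{A}\to\la[I\cup J]{A}$ merges witnesses arising from distinct splits of $D\conc\la{A}$; and the rewrite $\la[I]{A}\to\eps[I]$ captures the moment the lookahead has just been verified. The forward direction of the invariant says every $\ell\in I_k$ yields a bona fide match end $i_0+k-\ell$; the converse says every valid match end is captured at the step where its lookahead first fires.

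Given the invariant, the lemma follows by unrolling $\MaxEnd{\loc{s}{i_0}}{R}{-1}$. At each step $k<n$ with $\eps[I_k]\in R_k$, the running maximum is updated to $\max(m,(i_0+k)-\min(I_k))$; since smaller offset means later end position, $(i_0+k)-\min(I_k)$ is the latest match end witnessed at step $k$, and taking the max across steps gives the overall latest. The recursion terminates early when $R_k=\emp$, which by the invariant means no further match end can ever arise. For $k=n$, the final-location branch uses $\Elimz{R_n}$: replacing \eanchor\ by $\eps$ is exactly the adjustment needed so that still-pending lookaheads requiring end-of-input can now fire; an $\Elimz$-variant of the invariant shows that $\eps[I]\in\Elimz{R_n}$ collects the remaining match ends $|s|-\ell$ for $\ell\in I$. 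Together these give statement~(2); statement~(1) corresponds to $m$ remaining at $-1$ throughout the pass, which by the invariant is equivalent to the nonexistence of any match starting at or after $\loc{s}{i_0}$.

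The principal obstacle is stating the invariant precisely enough to handle the interaction between concatenation derivatives of $D$ --- which may introduce multiple $\la[\cdot]{A}$ instances at different offsets --- and the offset-union rewrites that merge them, together with the end-anchor-dependent lookaheads handled by $\Elimz$ at the final location. Once the invariant is nailed down, the remaining argument is routine bookkeeping combined with a direct appeal to the correctness of derivatives for $\RE$.
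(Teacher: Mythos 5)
Your proposal is correct and follows essentially the same route as the paper's own proof outline: both reduce the claim to the observation that $\MaxEndName$ iterates derivatives of $\all\conc B\conc E\conc\la{A}$ while the offset annotations record candidate match ends (with the final location handled separately via the $\eanchor\mapsto\eps$ replacement), and both discharge the semantic side by appealing to the derivative--match-semantics correspondence for $\RE$ proved in~\cite{Zhu24}. Your version merely makes explicit, as an inductive invariant on the derivative sequence $R_k$, what the paper states informally about when $\DER{\loc{s}{\jmath}}{\la{A}}$ is first invoked and how $\min(I)$ tracks the latest valid end.
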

\begin{proof}[Proof Outline]
  Consider $i_0=0$ and let $R=\lb{B}E\la{A}$ where $A,B,E\in\REc$.
  First observe that when $\DER{\loc{s}{\jmath}}{\la{A}}$ is 
invoked it is when $\loc{s}{0}\DERS{\all\conc B\conc E}\loc{s}{\jmath}$.
This follows because $\lb{B}$ is replaced by $\all\conc B$ and
$\MaxEndName$ just iterates derivatives from one location to the
next.  From this point forward the offsets after taking each
derivative are increased. For the current location $\loc{s}{i}$
and $\la[I]{D}$ where $D$ has been derived from $A$ we know that the
latest \emph{candidate} match exists at index
$i-\min(I)$. $\MaxEndName$ then keeps track of the latest
\emph{valid} match end index when $D$ is nullable.
So $\MaxEnd{\loc{s}{0}}{R}{-1}$ returns the latest such index or
-1 if there is none.  The final location is handled separately
which is the only location where $\eanchor$ is equivalent to $\eps$.

We now use the fact that $\REs$ is a fragment of $\RE$ and that the
derivative rules for $\REc$ are the same as in $\RE$.
We use \cite[Theorem~2]{Zhu24} (say $\dagger$)
\[
\begin{array}{r@{\;}c@{\;}l}
\exists\, \imath:\pair{\loc{s}{\imath}}{\loc{s}{\jmath}} \models R
&\biff&
\exists\,\imath: \loc{s}{\imath}\models \lb{B}
\band
\pair{\loc{s}{\imath}}{\loc{s}{\jmath}}\models E
\band
{\loc{s}{\jmath}}\models \la{A}
\\
&\stackrel{\dagger}{\biff}&
\exists\,\imath:\loc{s}{0}\DERS{\all\conc B}\loc{s}{\imath}\DERS{E}\loc{s}{\jmath}\DERS{A\conc\all}\loc{s}{|s|}
\stackrel{\dagger}{\biff}
\loc{s}{0}\DERS{\all\conc B\conc E}\loc{s}{\jmath}\DERS{A\conc\all}\loc{s}{|s|}
\end{array}
\]
This completes the proof 
because $\MaxEndName$ returns the maximal such $\jmath$
iff it exists or else $-1$.
\end{proof}

\subsection{Leftmost-Longest Match Algorithm}

We now describe the main match algorithm in {\REs}.  It uses reversal
and the $\MaxEndName$ algorithm above in two directions to compute
the match such that the so called POSIX semantics holds.
What is unique about this algorithm
is that \emph{in the general case} described below it traverses the
input string in \emph{reverse} in the first phase in order to find the
earliest or leftmost start index. 
We describe the algorithm for $R=\lb{B}\conc
E\conc\la{A}$ as follows.
\[
\begin{array}{l@{\;}c@{\;}l}
  \FindMatch{s}{R}&\eqdef&
  \blet\, k = \MaxEnd{\loc{\REV{s}}{0}}{\REV{R}}{-1}\,\bin \\
  && \quad 
  \IfThenElse{k=-1}{\breturn\,\NoMatch}{}\\
  && \quad\quad\blet\,i=|s|-k;j=\MaxEnd{\loc{s}{i}}{E\conc\la{A}}{-1}\,\bin \\
  && \quad\quad\quad\breturn\,\pair{\loc{s}{i}}{\loc{s}{j}}
\end{array}
\]
\begin{thm}[LLMatch]
  \label{thm:Match}
  $\FindMatch{s}{R}$ returns $\NoMatch$ if there exists no match of
  $R$ in $s$ else returns the match
  $\pair{\loc{s}{i}}{\loc{s}{j}}$ of $R$ where $i$ is minimal and
  $j$ is maximal for $i$.
\end{thm}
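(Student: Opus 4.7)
The plan is to decompose the claim along the two phases of $\FindMatch{s}{R}$ and apply Lemma~\ref{lma:1} (correctness of $\MaxEndName$) and Theorem~\ref{thm:REV} (reversal preserves $\models$) in each phase. I first observe that $\REV{R}=\lb{\REV{A}}\conc\REV{E}\conc\la{\REV{B}}$ is again in $\NF{\REs}$, since reversal swaps positive lookaheads and lookbehinds and keeps the $\REc$ fragment closed, so the first call to $\MaxEndName$ lies within the scope of Lemma~\ref{lma:1}.

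For the $\NoMatch$ branch, Lemma~\ref{lma:1}(1) applied to the phase-one call gives $k=-1$ iff no span in $\REV{s}$ starting at an index $\geq 0$ models $\REV{R}$. Theorem~\ref{thm:REV} translates this directly to the statement that no span in $s$ models $R$, so returning $\NoMatch$ is both sound and complete.

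Assume now $k\geq 0$. By Lemma~\ref{lma:1}(2), $k$ is the maximal end index $\jmath$ in $\REV{s}$ for which some span $\pair{\loc{\REV{s}}{\imath}}{\loc{\REV{s}}{\jmath}}\models\REV{R}$. Theorem~\ref{thm:REV} sends such a span to $\pair{\loc{s}{|s|-\jmath}}{\loc{s}{|s|-\imath}}\models R$ and, since reversal is width-preserving and involutive, maximising $\jmath$ on the $\REV{s}$-side coincides with \emph{minimising} the start index on the $s$-side. Hence $i\eqdef|s|-k$ is the leftmost start of any match of $R$ in $s$. In particular $\loc{s}{i}\models\lb{B}$, so by the concatenation semantics $\pair{\loc{s}{i}}{\loc{s}{j}}\models R \biff \pair{\loc{s}{i}}{\loc{s}{j}}\models E\conc\la{A}$ for every candidate $j$. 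A second application of Lemma~\ref{lma:1}—specialised to the case where the leading lookbehind is trivial so that $\MaxEndName$ iterates derivatives from the \emph{fixed} start $\loc{s}{i}$ rather than allowing it to slide forward—identifies $j$ as the maximal such end, which is the longest match at start $i$ as required.

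The main obstacle I anticipate is justifying the shift from ``start $\geq i_0$'' in Lemma~\ref{lma:1} to ``start $=i$'' demanded by the second call: the first call operates on an $\NF{\REs}$ regex whose lookbehind is internally rewritten as an $\all\conc B$ prefix, whereas phase two deliberately drops that prefix. I would either establish a mild specialisation of Lemma~\ref{lma:1} for regexes of the form $E\conc\la{A}$, or argue directly via the derivation relation $\DERS{}$ that without a leading $\all$ the derivative iteration is pinned to the start $\loc{s}{i}$. The remaining index arithmetic through the reversal is routine but easy to misalign, and is the other place where care is needed.
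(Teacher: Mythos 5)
Your proof takes essentially the same route as the paper's: phase one applies Lemma~\ref{lma:1} to $\REV{R}$ and transfers the maximal end index in $\REV{s}$ to the minimal start index $i=|s|-k$ in $s$ via Theorem~\ref{thm:REV}, and phase two applies Lemma~\ref{lma:1} again to $E\conc\la{A}$ from the fixed start $\loc{s}{i}$, using $\loc{s}{i}\models\lb{B}$ to lift the conclusion back to $R$. The subtlety you flag---that Lemma~\ref{lma:1} quantifies over starts $\imath\geq i_0$ while phase two needs the start pinned to exactly $i$, which holds only because $E\conc\la{A}$ carries no leading $\all\conc B$ prefix to let the start slide---is genuine and is glossed over in the paper's own proof, so your proposed specialisation of the lemma (or the direct argument via the derivation relation) would make the argument tighter than the original.
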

\begin{proof}
  Let $R=\lb{B}\conc E\conc\la{A}$.
  Then $\REV{R}=\lb{\REV{A}}\conc\REV{E}\conc\la{\REV{B}}$.
  Let $k=\MaxEnd{\loc{\REV{s}}{0}}{\REV{R}}{-1}$.
  If $k=-1$ then, by Lemma~\ref{lma:1}(1),
  $\nexists\s\in\SU[\REV{s}]:\s\models\REV{R}$ and, by Theorem~\ref{thm:REV},
  $\nexists\s\in\SU[s]:\s\models R$.

  Assume $k\geq 0$. Then, by Lemma~\ref{lma:1}(2),
  $k$ is the \emph{maximal} index such that
  $\exists\, x:\pair{x}{\loc{\REV{s}}{k}}\models \REV{R}$.
  So, by Theorem~\ref{thm:REV}, $i=|s|-k$ is the \emph{minimal} index such that
  $\exists\, x:\pair{\loc{s}{i}}{x}\models R$.
  (Recall that $\REV{\loc{\REV{s}}{k}} = \loc{s}{|s|-k}$.)
  Thus, $i$ is the minimal index such that
  \[
  \exists\,x: 
  {\loc{s}{i}} \models \lb{B}
  \band
    \pair{\loc{s}{i}}{x}\models E \band
  x\models\la{A} 
  \]
  In particular, it follows that
  $\exists\,x:\pair{\loc{s}{i}}{x}\models E\conc\la{A}$.
  Now, by using Lemma~\ref{lma:1}(2) again,
  it follows that that $j=\MaxEnd{\loc{s}{i}}{E\conc\la{A}}{-1}$
  is the maximal index
  such that $\pair{\loc{s}{i}}{\loc{s}{j}}\models R$.
\end{proof}
In the implementation of $\FindMatchName$ for a \emph{single} POSIX
match search as descibed above, the sets $I$ in $\la[I]{A}$ are
implemented by only keeping their \emph{minimal} elements -- then
$I\cup J=\min(I,J)$. This does not affect any of the statements above,
because only the minimal element is ever used above but the more general
formulation is needed in Section~\ref{sec:LLMatches}.

\begin{thm}[InputLinearity]
\label{thm:lin}
The complexity of $\FindMatch{s}{R}$ is linear in $|s|$.
\end{thm}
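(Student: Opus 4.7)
The plan is to bound the cost of $\FindMatch{s}{R}$ by showing it performs $O(|s|)$ tail-recursive iterations of $\MaxEndName$, each of which performs amortized $O(1)$ work once derivative transitions and nullability tags are cached. Treat $R$ as fixed so that any constants depending on $|R|$ are absorbed into the big-$O$.

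First, I would analyze the iteration count. By inspection of the code, $\FindMatchName$ issues two calls: one to $\MaxEnd{\loc{\REV{s}}{0}}{\REV{R}}{-1}$ and (conditionally) one to $\MaxEnd{\loc{s}{i}}{E\conc\la{A}}{-1}$ with $i\leq|s|$. Each call is tail-recursive and advances the location by exactly one character per step, terminating either when the current regex is $\emp$ or at the final location. Since $|\REV{s}|=|s|$, the total number of iterations is at most $2(|s|+1)$.

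Second, I would isolate the per-iteration cost. Each step (i)~reads $\head{\loc{s}{i}}$, (ii)~looks up $\DER{\loc{s}{i}}{R'}$ for the current derivative $R'$, (iii)~tests whether $\eps[I]\in R'$ and, if so, extracts $\min(I)$, and (iv)~updates the running $\max$ on the candidate end index. Items (i) and (iv) are evidently $O(1)$. For (iii), the implementation stores nullability and the $\eps[I]$-annotation as a tag on each AST node at construction time (as noted right after the nullability subsection), so the test and extraction are $O(1)$ lookups. For (ii), I would invoke the standard Brzozowski-style finiteness: the set of regexes reachable from $R$ under the rewrite-normalised derivative is finite, its size $N_R$ depends only on $R$, and with hash-consing plus memoisation of the transition table keyed on $(\textit{state},\textit{minterm})$ pointer equality, every transition after the first is $O(1)$. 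The one-off construction of each new DFA entry costs work that is bounded by a function of $|R|$ alone, contributing at most $O(N_R\cdot m_R)$ total, where $m_R$ is the number of distinct minterms of $R$ — a quantity independent of $|s|$.

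The main obstacle to this argument is the annotation sets $I$ attached to lookaheads, which a priori grow unboundedly with $|s|$ and would destroy finiteness of the state space. I would dispose of this by appealing to the observation made immediately after Theorem~\ref{thm:Match}: in single-match mode, $I$ collapses to its minimum, and the relative-offset representation $\pair{k}{X}$ factors the unbounded integer $k$ out of the regex structure, so that the state modulo $k$ ranges over a finite set while $k$ itself is a single integer updated in $O(1)$ per step. Combining the three bounds yields total cost $O(N_R\cdot m_R) + O(|s|) = O(|s|)$, proving input-linearity.
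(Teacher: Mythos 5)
Your proposal is correct and follows essentially the same route as the paper's own proof: two passes over the input, finiteness of the reachable derivative state space under associativity/commutativity/idempotence normalization with cached DFA transitions, and the collapse of the lookahead annotation $I$ to a single minimal offset in single-match mode so that the unbounded counter does not blow up the state space. Your write-up is merely a more explicit accounting (iteration count, per-step cost, one-off construction cost) of the same argument the paper sketches.
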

\begin{proof}
  The main search algorithm runs twice over the input $s$.  The
  regexes reached by reading the symbols from $s$ are internalized and
  cached as states in a DFA with $q_0=R$ as the initial state and
  $\DER{a}{q}$ as the transition function of the DFA, where the
  operators $\alt$ and $\rand$ are treated as associative, commutative
  and idempotent operators, which results in a finite state space
  whose size is independent of $|s|$.  The offset annotation $I$
  maintained in $\la[I]{A}$ is incremented linearly up to the point
  when $A$ becomes nullable and where
  $\la[I]{A}\alt\la[J]{A}=\la[\min(I,J)]{A}$.
\end{proof}

All nonbacktracking engines are in principle input linear for a single
match search and internally maintain some form of DFA.  When the
number of DFA states grows too large they fall back in an NFA
mode. Such a fallback mechanism is currently not supported in {\REs}
but can be implemented by working with a generalized form of Antimirov
derivatives~\cite{Ant95}.

The \REs{} engine capitalizes on the fact that the symbolic derivative
based automata construction is small and independent of the alphabet size, but the
state space can still grow exponentially with respect to the size of the regex in the worst case. 
For the most critical use cases, we provide the option to precompile
the regex into a \emph{complete DFA} up front. This allows for
extremely fast matching at the cost of a potentially large memory
footprint, which can be known ahead of time.

\subsubsection{Finding All Nonoverlapping Leftmost-Longest Matches}
\label{sec:LLMatches}
The key algorithm $\MaxEndName$ above is generalized in {\REs} into an
algorithm $\AllEndsName$ that produces \emph{all} match ends as
follows.  We then discuss more informally how $\AllEndsName$ is used
in the general match algorithm in {\REs} to locate all nonoverlapping
POSIX matches.  Similar to $\MaxEndName$ the algorithm takes a regex
$R=\lb{B}\conc E\conc \la{A} \in\NF{\REs}$ and a start location
$\loc{s}{i}$ but in this case a \emph{set} $M$ of \emph{match end
indices found so far}.  We consider only the case of $i<|s|$ with the
case $i=|s|$ being analogous to above.
\[
\AllEnds{\loc{s}{i}}{R}{M} \eqdef
\left\{
\begin{array}{ll}
  M,& \bif\, R=\emp; \\
  \AllEnds{\loc{s}{i+1}}{\DER{\loc{s}{i}}{R}}{M\cup i-J},& \belse\,\bif\,\eps[J]\in R; \\
\AllEnds{\loc{s}{i+1}}{\DER{\loc{s}{i}}{R}}{M},& \botherwise.
\end{array}
\right.
\]
where $i-J \eqdef \{i-\jmath\mid \jmath\in J\}$.
Let $\MaxEndMain{x}{R}\eqdef\MaxEnd{x}{R}{-1}$.
Thus
$\MaxEndMain{\loc{s}{0}}{R}=\max(\AllEnds{\loc{s}{0}}{R}{\emptyset})$,
provided that $\max(\emptyset)\eqdef -1$ here, where \emph{all} match ends,
including the maximal one, are collected in $M$.

If we now first compute $I =
|s|-\AllEnds{\loc{\REV{s}}{0}}{\REV{R}}{\emptyset}$ then it holds,
similarly to case above, that $I$ contains \emph{all the start
indices} $i$ such that $\loc{s}{i}\models\lb{B}$ and $\exists\,
j:\pair{\loc{s}{i}}{\loc{s}{j}}\models E\conc\la{A}$.

Starting with $i=\min(I)$ we compute
$j=\MaxEndMain{\loc{s}{i}}{E\conc\la{A}}$.  This gives us the first
POSIX match $\pair{\loc{s}{i}}{\loc{s}{j}}$.  We now repeat the same
search from $I:= \{\imath\in I \mid i<\imath, j\leq\imath\}$ to ignore
overlapping matches. Observe that $i<\imath$ is necessary to make
progress when $j=i$.

This concludes our high-level overview of the
main matching algorithm $\LLMatches{s}{R}$ in {\REs}.
The implementation of $\LLMatches{s}{R}$ is not linear, but may in
the worst case be quadratic in $|s|$. However, our extensive
evaluation does consistently indicate linear behavior, even for the
\emph{quadratic} benchmark (see Section~\ref{sec:eval-quadratic}).

\begin{ex}
  Let $R=\texttt{b+(?=c)}$ and $s=\texttt{"aaaaabcababbc"}$.
  Then initially
  \[
  I = |s|-\AllEnds{\loc{\texttt{"cbbabacbaaaaa"}}{0}}{\texttt{\lb{c}b+}}{\emptyset}=
  13 - \{2,3,8\} = \{5,10,11\}
  \]
  The first match starts from $5$ and ends at
  $\MaxEndMain{\loc{s}{5}}{\texttt{b+(?=c)}}=6$.
  This leaves $I:=\{10,11\}$.
  The next match starts from $10$ and ends at
  $\MaxEndMain{\loc{s}{10}}{\texttt{b+(?=c)}}=12$.
  This leaves $I:=\emptyset$ and concludes the search.
  Thus $\LLMatches{s}{R}=\{\pair{\loc{s}{5}}{\loc{s}{6}},\pair{\loc{s}{10}}{\loc{s}{12}}\}$.
\end{ex}

\section{Implementation}
\label{sec:implementation}

Here we give a brief overview of the implementation of the engine along 
with some key optimizations and performance considerations.
At the high level, derivatives are computed lazily and cached in
a DFA with regexes internalized as states and
use the transition function $\DER{a}{q}$ for states $q$.

The core parser was taken directly from the .NET runtime, but was
modified to read the symbols \texttt{\&} and \rnot{} as
intersection and complement respectively. The parser was also extended
to interpret the symbol \texttt{\_} as the set of all characters, 
since it is very commonly used in our regexes. Fortunately the escaped
variants \texttt{\bslash{}\&}, \texttt{\bslash{}\rnot{}} and \texttt{\bslash{}\_} were
not assigned to any regex construct, and existing regex patterns
can be used by escaping these characters.

The parser also supports Unicode symbols for operators
$(\xor,\xnor,\rightarrow,\nrightarrow)$, as
explained in Table~\ref{tab:adv-constructs}, for more
advanced Boolean operations. Moreover, rather than implementing
those operators through the core Boolean operators, one can add
specialized rules.  Derivatives rules for the extended operations are in
fact \emph{identical} as for $\diamond$ in Section~\ref{sec:derivatives}.
Nullability, for example for XOR, can be defined by
$\IsNullable[x]{L\xor R}\eqdef
(\IsNullable[x]{L}\neq\IsNullable[x]{R})$, and analogously for the
other operators.

The matching implementation of nearly all industrial regex engines consists of 
two separate components: a prefilter and a matcher. The prefilter is essential to
be competitive with other engines, and is used to quickly eliminate
non-matching strings. Our derivative-based approach is used in both
components, which in some cases provides a significant advantage over other engines.

\subsection{Prefilter}
\label{sec:prefilter}

The prefilter is an input-parallel operation on the
side that is applied aggressively. For simple regexes, such as
\texttt{[abc]}, the prefilter can locate the entire match in parallel, 
where the only real-world limitation is availability of space
per parallel operation.  For more complex regexes, the prefilter is
used to locate the \emph{prefix} of the input string that is
guaranteed to match the regex, before the core engine kicks in.  In
{\REs} the prefilter is using vectorized bitwise operations, which are
very efficient on modern CPUs. The speedup of processing 64 bytes at a
time, e.g., using AVX512 instructions, is significant and immediately
visible in the overall performance.

It is important to note that, unlike in many other engines, the prefilter
optimizations in our engine are not limited to simple regexes, but are
applied to all regexes in a derivative-based
manner, including lookarounds, which makes the
extensions very competitive in real-world scenarios. The impact of
this systematic approach to prefilters is shown in
Sections~\ref{sec:eval-lookarounds} and \ref{sec:eval-long-matches}.

\subsubsection{Breadth-First Derivative Calculation}
Derivatives are used to examine the optimizations available for the
regex pattern.  One key optimization is to explore \emph{all}
derivatives symbolically, or in a \emph{breadth-first} manner, and
bitwise-merge their conditions until reaching the first successful
match.  This provides a way to optimize the matching process with more
specialized algorithms.

\subsubsection{Prefix Search}
Often, the entire regex pattern is a single string literal or a set of
words.  In such cases, we optimize the matching process by using a
dedicated string matching algorithm. We first check if the regex
pattern is a string literal or a small set (up to 20) of string
literals, and if so, we use the \emph{Teddy}~\cite{teddy} algorithm,
recently supported in {.NET9} to locate matches.

\subsection{Combined Techniques and Inner Loop Vectorization}
\label{sec:combined-techniques}

A key difference from other regex engines is that we do not just use
specialized search algorithms for locating the prefix, but the 
algoritms are deeply integrated with the core engine. 
We combine the search algorithms with automaton
transitions that have been cached from derivatives,
which allows us to use specialized algorithms for the
prefix in the input text, and transition through multiple steps in the
automaton right away.

One such example is the regex pattern
\texttt{abcd.*efg}, which can be optimized to match the string literal
\texttt{abcd} in the input text and then immediately transition to
the automaton state representing \texttt{.*efg} for the rest of the match.

We also use the derivatives to perform
intermediate prefix computations and appropriate skipping in inner
loop of the match. The breadth-first calculation of
derivatives is used to compute the prefix of the remaining regex,
which is then used to skip over large parts of the input string in a
single step, and only perform the more expensive automaton transitions
on the remaining positions. This means that the rest of the
aforementioned pattern \texttt{.*efg} makes use of input-parallel
algorithms as well.

All of the DFA states have pre-computed optimizations during construction, which
are used whenever possible to skip over parts of the input string. A
benchmark scenario illustrating the power of inner-loop optimizations is
shown in Section~\ref{sec:eval-long-matches}, where \resharp{} is 
shown to be significantly faster on long matching strings than other engines.

However, it is important to note that vectorization is not always
beneficial.  For example, even though the AVX512 instruction set can
process 64 bytes at a time, the overhead of setting up the vectorized
operations can be significant for large common character sets, such as
\texttt{[a-zA-Z]}. In such cases, the engine falls back to automaton
transitions.  Even the Teddy algorithm is not always beneficial, as it
has a certain upper limit (roughly 20) on the number of strings it can
process efficiently, otherwise the engine falls back to automaton
transitions for large alternations as well.

\subsection{Rewrite Rules and Subsumption}
\label{sec:rewrites}

Our system implements a number of regex rewrite rules, which are
essential for the efficiency of the implementation.
Figure~\ref{fig:rw} illustrates the basic rewrite rules that are
always applied when regular expressions are constructed. Intersection
and union are implemented as commutative, associative and idempotent
operators, so changing the order of their arguments does not change the result.

Each $R\in\REc$ comes with a predicate $\cond{R}\in\Psi$
that approximates its relevant characters. The definition is:
$\cond{\psi}\eqdef\psi$, $\cond{\rnot R}\eqdef\anychar$,
$\cond{L\alt R}=\cond{L\conc
  R}\eqdef\cond{L}\lor\cond{R}$, $\cond{L\rand
  R}\eqdef\cond{L}\land\cond{R}$, and
$\cond{\RECount{R}{m}}=\cond{R\st}\eqdef\cond{R}$. Also
$\cond{\eps}=\cond{\sanchor}=\cond{\eanchor}\eqdef\bot$.
All operations of $\A$ are $O(1)$ operations and if
$\varphi\equiv\psi$ then $\varphi=\psi$.  For example, the test
$\bslash{n}\notin\den{\cond{R}}$ is $\cond{\bslash{n}}\land\cond{R} =
\bot$ and the test $\den{\phi}\subseteq\den{\psi}$ is
$\phi\lor\psi = \psi$ in Figure~\ref{fig:rw}.

There are many further derived rules that can be beneficial in
reducing the state space.  Unions and intersections are both
implemented by sets.  If a union contains a regex $S$, such as a
predicate $\psi$, that is trivially subsumed by another regex $R$,
such as $\psi\st$, then $S$ is removed from the union. This is an
instance of the loop rule in Figure~\ref{fig:rw} that rewrites
$\psi\st{\alt}\RECount{\psi}{1}$ to $\psi\st$ (where $\psi\st=\RELoop{\psi}{0}{\infty}$).

A further simplification rule (using $\cond{R}$ in Figure~\ref{fig:rw}) 
for unions is that if a union contains
a regex $\psi\st$ and all the other alternatives only refer to
elements from $\den{\psi}$ then the union reduces to $\psi\st$. This
rule rewrites any union such as (${.\st}\texttt{ab}.\st\alt .\st$) to
just $.\st$ (recall that $.\equiv\texttt{[\caret\bslash{n}]}$), which
significantly reduces the number of alternatives in unions.

\begin{figure}
\begin{small}
\[
   \inferrule
   {\rnot(\all)}
   {\bot}
\quad
   \inferrule
   {\rnot\bot}
   {\all}
\quad
   \inferrule
   {\rnot\rnot R}
   {R}
\quad
   \inferrule
   {\rnot\eps}
   {\anychar\plus}
\quad
   \inferrule
   {\rnot(\anychar\plus)}
   {\eps}
\quad
   \inferrule
   {\bot{\conc}R}
   {\bot}
\quad
   \inferrule
   {R{\conc}\bot}
   {\bot}
\quad
   \inferrule
   {\eps{\conc}R}
   {R}
\quad
   \inferrule
   {R{\conc}\eps}
   {R}
\quad
   \inferrule
   {\bot\st}
   {\eps}
   \quad
   \inferrule
   {\all \alt R}
   {\all}
\quad
   \inferrule
   {\all \rand R}
   {R}   
\quad
   \inferrule
   {\la[I]{\bot}}
   {\bot}   
\]
\\[-1em]
\[
   \inferrule*[left=loop,right=${(l\leq k\leq m,m\leq\infty)}$]
   {\RELoop{R}{l}{m}\alt\RELoop{R}{k}{n}}
   {\RELoop{R}{l}{\max(m,n)}}
   \quad
   \inferrule*[right=$\bslash{n}\notin\den{\cond{R}}$]
   {.\st \alt R}
   {.\st}
   \quad
   \inferrule*[right=$\bslash{n}\notin\den{\cond{R}}$]
   {.\st \rand R}
   {R}
\]
\\[-1.5em]
\[
\quad
    \inferrule*[left=sub1]
    {(R1 \rand R2) \alt R1}
    {R1}
    \quad
    \inferrule*[left=dedup]
    {R1 \diamond R2 \diamond R1}
    {R1 \diamond R2}
    \quad
    \inferrule*[left=sub2]
    {(R1 \rand (R2|R3)) \alt (R1 \rand R2)}
    {(R1 \rand (R2|R3))}
    \]
\\[-1.5em]
\[
\quad
    \inferrule*
    {R1R2|R1R3}
    {R1(R2|R3)}
    \quad
    \inferrule*
    {R1R3|R2R3}
    {(R1|R2)R3}
    \quad
    \inferrule*[right=$\den{\phi}\subseteq\den{\psi}$]
    {\RELoop{\phi}{0}{m}\psi\st}
    {\psi\st{}}
    \quad
    \inferrule*[right=$\IsNullable{R}$]
    {\la[I]{R}}
    {\eps[I]}
\]
\\[-1.5em]
\end{small}
\caption{Basic rewrite rules where $\diamond\in\{\alt,\rand\}$ and $\phi,\psi\in\Psi$.\label{fig:rw}}
\end{figure}

\subsection{Overhead Elimination}
\label{sec:overhead}

To make the engine competitive in scenarios with frequent matches, 
it is important to keep
the engine as lightweight as possible and to avoid unnecessary operations. Many 
optimizations are cached into bit flags, which are used to quickly
determine if a certain operation is necessary. For example, there
is a bit flag for checking if a regex is always nullable, which is
immediately marked as true if the regex accepts the empty string $\eps$. There 
is also a specialized flag for anchor nullability, which is used to
quickly determine if an anchor was valid in the previous position.

There are also shortcuts for quickly
return $\NoMatch$ if a dead state is reached, and for checking
if an automaton state has more specialized algorithms available.
We are also extensively using pointer comparisons for equality checks,
e.g. for checking if two regexes are the same, or if a regex is a subset
of another regex. 

For match end lookups, as we know the position of the match start, 
we often skip a number of transitions in the
automaton, e.g. if the regex is \texttt{abcd.*efg}, we can skip the
transitions for \texttt{abcd} entirely and start the match 4 characters ahead 
with the transitions for \texttt{.*efg}, which is a significant optimization
for many regexes.

The engine also supports using ASCII bytes as input, which effectively
doubles the speed of vectorized operations, as the engine can process
double the characters per parallel step. We do not use this optimization
in any of the benchmark comparisons, as the UTF-16 input is more mature in .NET and has more 
algorithms readily available, but it is a significant optimization 
for many real-world scenarios, especially when processing large amounts of data.
We are also planning to support UTF-8 input directly 
in the future.

When the engine detects no opportunities for more-specialized algorithms
and falls back to automaton transitions, it uses a highly
optimized loop, which does not store any intermediate results, and
only uses the automaton transitions to determine the match end. Additionally,
the engine compiles very small regexes directly into full DFAs, which
eliminates a conditional branch dead center in the hot-path, which would otherwise
be used to lazily create new states.

\subsection{Pending Nullable Position Representation in Lookahead Annotations}
\label{sec:pendingnulls}

The set of pending match positions is a key component of the engine, and is used to
keep track of context throughout the matches. As the set 
can grow very large, it is important to keep the representation and operations
on the set lightweight in terms of memory and CPU usage.

The set is represented as a sorted list of ranges, which is minimized during
construction. One frequent operation is to increment all the positions in the set, 
which is done by simply incrementing the start and end of each range.
Contiguous ranges are also merged during this operation, which often results in
a very small memory overhead. For example, a pending match set of 10000 
sequential positions can be represented as a single range, which
can take up as little as 8 bytes of memory for int32 positions. This allows
the context length to be very large in practice, and the engine can handle
complex regexes with many lookarounds.

\subsection{Validating Correctness of {\REs} Implementation Using Formalized Lean Semantics}
\label{sec:validating}

There are many low-level optimizations and rewrite rules in the {\REs}
engine.  For example, for obvious reasons, no input string $s$ is ever
actually reversed but $\REV{s}$ is an abstraction that hides
underlying index calculations.  We use the Lean formalization of {\RE}
and its POSIX matching semantics~\cite{Zhu24} that is \emph{executable} because
the membership test $a\in\den{\psi}$ in $\A$ is executable.  
The \emph{span} universe $\SU$ is in Lean defined as
$\D^*\times\D^*\times\D^*$ with $\tuple{u,v,w}$ representing
$\pair{\loc{s}{i}}{\loc{s}{j}}$
where $s=\REV{u}vw$, $i=|u|$ and $j=|uv|$.
Although the semantics in Lean is highly \emph{nonlinear} it does have the
same semantics for {\REs} (as ${\REs}\subsetneq{\RE}$) and is
the only test oracle available.

The {\REs} engine was extensively tested with thousands of regexes,
and the results where compared with the expected results according to
the Lean specification.  This helped to find numerous bugs throughout
the development of the engine, such as the handling of the edges of
the input string and detecting one-off errors in reversal.  One such
bug we found during implementation in the regex
\texttt{\caret{}\bslash{}n+}, which should have matched the full input
string \texttt{"\bslash{}n\bslash{}n"}, but instead matched only the
second \texttt{\bslash{}n}, because the engine did not handle the edge of the
input string correctly.

\section{Evaluation}
\label{sec:evaluation}

We have evaluated the performance of our engine on a number of regex
benchmarks, and compared it to other regex engines available 
in the \emph{BurntSushi/rebar} benchmarking tool~\cite{rebar}.
The benchmarks are split into two categories: the baseline comparison (Section~\ref{sec:eval-baseline})
consists of the curated regex benchmark suite from the \emph{BurntSushi/rebar}
tool; the extended comparison (Section~\ref{sec:eval-extended})
consists of a set of regexes that are
designed to emphasize the strengths of our engine.

The benchmarks report the throughput of the engine in terms of the number
of bytes processed per second, and the geometric mean ($\mu_g$) ratio of the throughput
is used as the primary metric for comparison. Each of the benchmarks
is reported by the ratio of throughput compared to the
best performing engine in the benchmark, where \textbf{1x}
is the leader in each individual benchmark. The overall $\mu_g$ ratio 
is displayed for each major category, where
the displayed ratio means, e.g., \textbf{3x} is twice as fast as \textbf{6x}. Any number
larger than \textbf{1x} implies that the engine lost some benchmarks in the category.
The results are shown in Figure~\ref{fig:baseline-and-extended}.
Below we analyze the results in some detail.

The measurements were performed on a machine running an Ubuntu 22.04 Docker image 
with a AMD Ryzen Threadripper 3960X 24-Core Processor and 128 GB of memory.

\begin{figure}
\centering
\begin{subfigure}{6.3cm}
\begin{small}
    \begin{tabular}{@{}l@{\,}|l||@{\,}l@{\,}|@{\,}l@{\,}|@{\,}l@{\,}|@{\,}l@{\,}|@{\,}l@{\,}|@{}}
      \multicolumn{1}{c}{} & \multicolumn{6}{c}{$\mu_g$ \textit{relative slowdown ratio}} \\ \cline{2-7}
      \textit{Engine}  &
      \begin{tabular}{@{}l@{}}Sec\\\ref{sec:eval-baseline}\end{tabular} &
      \begin{tabular}{@{}l@{}}Sec\\\ref{sec:eval-literals}\end{tabular} &
      \begin{tabular}{@{}l@{}}Sec\\\ref{sec:eval-words}\end{tabular} &
      \begin{tabular}{@{}l@{}}Sec\\\ref{sec:eval-cloud}\end{tabular} &
      \begin{tabular}{@{}l@{}}Sec\\\ref{sec:eval-quadratic}\end{tabular} &
      \begin{tabular}{@{}l@{}}Sec\\\ref{sec:eval-date}\end{tabular}
      \\ \hline
    resharp & \textbf{1.48} & \textbf{1.98}  & \textbf{1.12} & \textbf{1.43} & \textbf{1.86} & \textbf{1} \\
    rust/regex & \textbf{2.54} & \textbf{1.49}  & 3.69 & \textbf{1.38} & 21.9 & \textbf{1.2} \\
    hyperscan & \textbf{2.76} & \textbf{1.69}  & \textbf{1.76} & 102 & \textbf{1} & \textbf{2.26} \\
    dotnet/comp & 5.29 & 2.7  & 3.71 & \textbf{3.77} & \textbf{4.52} & 208 \\
    pcre2/jit & 7.86 & 3.48  & \textbf{2.3} & 615 & 23.6 & 17.6 \\
    dotnet/nobt & 9.14 & 5.88  & 6.61 & 27.7 & 42.7 & 3.58 \\
    re2 & 12.3 & 11.7  & 5.05 & 16.7 & 39.1 & 25.6 \\
    javascript/v8 & 19.5 & 6.26  & 4.78 & 1503 & 25.4 & 140 \\
    regress & 50.5 & 23.4  & 5.34 & 3690 & 77.1 & 521 \\
    python/re & 65.2 & 40  & 11.6 & 900 & 149 & 599 \\
    python/regex & 66.1 & 21.7  & 17 & 4516 & 122 & 800 \\
    perl & 74.7 & 41.1  & 44.9 & 2572 & 146 & 20.8 \\
    java/hotspot & 77.4 & 86.5  & 9.13 & 3841 & 40.1 & 618 \\
    go/regexp & 166 & 237  & 29.5 & 189 & 423 & 997 \\
    pcre2 & 285 & 472  & 42.6 & 8519 & 161 & 651 \\
    \hline
    \end{tabular}
\end{small}
    \caption{Baseline evaluation (Section~\ref{sec:eval-baseline}).}
    \label{fig:baseline-evaluation}
\end{subfigure}
\hfill
\begin{subfigure}{7.1cm}
  \begin{small}
    \begin{tabular}{@{}l@{\,}|l||@{\,}l@{\,}|@{\,}l@{\,}|@{\,}l@{\,}|@{\,}l@{\,}|@{\,}l@{\,}|@{\,}l@{\,}|@{}}
      \multicolumn{1}{c}{} & \multicolumn{7}{c}{$\mu_g$ \textit{relative slowdown ratio}} \\ \cline{2-8}
      \textit{Engine}   &
      \begin{tabular}{@{}l@{}}Sec\\\ref{sec:eval-extended}\end{tabular} &
      \begin{tabular}{@{}l@{}}Sec\\\ref{sec:eval-date-fixed}\end{tabular} &
      \begin{tabular}{@{}l@{}}Sec\\\ref{sec:eval-monster}\end{tabular} &
      \begin{tabular}{@{}l@{}}Sec\\\ref{sec:eval-extended-hidden-passwords}\end{tabular} &
      \begin{tabular}{@{}l@{}}Sec\\\ref{sec:eval-long-matches}\end{tabular} &
      \begin{tabular}{@{}l@{}}Sec\\\ref{sec:eval-sets-unicode}\end{tabular} &
      \begin{tabular}{@{}l@{}}Sec\\\ref{sec:eval-lookarounds}\end{tabular}
      \\ \hline
    resharp & \textbf{1.09} & \textbf{1}  & \textbf{1} & \textbf{1.09} & \textbf{1.02} & \textbf{1} & \textbf{1.14} \\
    hyperscan & \textbf{3.77} & \textbf{1.08}  & \textbf{2.14} & \textbf{1.79} & \textbf{5.85} & - & - \\
    dotnet/nobt & \textbf{10.7} & \textbf{2.13}  & \textbf{9.19} & - & 67.4 & \textbf{11.6} & - \\
    pcre2/jit & 20 & 23.8  & - & 86.2 & 38.5 & \textbf{9.21} & \textbf{15.8} \\
    rust/regex & 29.5 & 4.16  & 2747 & \textbf{7.63} & 20.1 & 82.9 & - \\
    dotnet/comp & 41.3 & 80.9  & 554 & 694 & \textbf{7.48} & 25.6 & \textbf{5.26} \\
    re2 & 48.9 & 175  & 1440 & 28 & 16.5 & - & - \\
    python/regex & 141 & 139  & 1673 & 1196 & 50.7 & 48.3 & 79.6 \\
    javascript/v8 & 214 & 86.5  & 449 & 66.9 & - & - & 141 \\
    python/re & 233 & 93  & 2005 & 936 & 188 & 62.4 & 135 \\
    regress & 360 & 99.6  & 1264 & 749 & - & - & 188 \\
    pcre2 & 503 & 2399  & - & 599 & 1076 & 943 & 255 \\
    java/hotspot & 698 & 176  & 2597 & 769 & 423 & 56.3 & 718 \\
    go/regexp & 957 & 318  & 2733 & 1135 & 763 & - & - \\
    perl & 1143 & 6.32  & 199 & 1310 & 7195 & 2863 & 1189 \\
    \hline
    \end{tabular}
  \end{small}
    \caption{Extended evaluation (Section~\ref{sec:eval-extended}).}
    \label{fig:extended-evaluation}
\end{subfigure}
\vspace{-1em}
\caption{Benchmark $\mu_g$ slowdown.
      Top three outcomes in each benchmark category are indicated in bold. 
}
     \label{fig:baseline-and-extended}
\end{figure}

\subsection{Baseline Comparison}
\label{sec:eval-baseline}

The baseline comparison is done on the popular curated regex benchmark suite
and using the publicly available \emph{BurntSushi/rebar} benchmarking tool, from which we
included all benchmarks that our engine supports. There are 27
benchmarks in total.  Benchmarks requiring unsupported features,
e.g., capture groups, are excluded from the comparison.  Since
\resharp{} (resharp) uses \emph{leftmost-longest} matching semantics,
the match results are carefully compared with the other engines. The
benchmarks also use \emph{earliest} matching semantics in the case of
Hyperscan. In 25 out of 27 benchmarks, the match results are
\emph{identical} to \emph{leftmost-greedy} engines.

The baseline benchmarks are ran ``as is'', 
without any modifications to the regexes or the input strings.
Certain apples-to-apples benchmarks, when appropriate, are included in the 
Section~\ref{sec:eval-extended}
to display the performance of the engine in a more controlled environment.
The baseline summary geometric mean of speed ratios is shown
in Figure~\ref{fig:baseline-evaluation} \emph{with a separate column for each
benchmark category below as indicated by the column title}.

In the figures dotnet/comp is the \texttt{Compiled} option in {.NET} and
dotnet/nobt is the \texttt{NonBacktracking} option in {.NET}.
We have omitted the engine version numbers, but
have used the most recent available stable versions in all cases.

\subsubsection{Literal and Literal-Alternate Categories (10 benchmarks)}
\label{sec:eval-literals}
The literal category consists of regexes that are simple string literals,
and the literal-alternate category consists of regexes that are simple
alternations of string literals.

These categories are orthogonal
to the engine itself, as the performance is mostly determined by the
string matching algorithms used in the engine, and whether an how well the engine
supports the literal optimizations, e.g., those in Section~\ref{sec:prefilter}.

\resharp{} does not win in either of these categories, see
Figure~\ref{fig:baseline-evaluation}, but it is consistently
close to the top performer, with the worst performance being in the
literal-alternate `sherlock-ru' benchmark, where the engine is 3x
slower than the best performing engine, rust/regex.

Having more highly optimized 8-bit string literal matching algorithms
would be beneficial to compete in the ASCII categories of this
benchmark, but the engine is still competitive here, and not far off
from the top in both categories. Rust and Hyperscan perform excellent
in both of these categories with their strong string literal
optimizations and geometric mean performance ratio.  Hyperscan has a single
outlier in the unicode literal `sherlock-ru' benchmark, where it is
10x slower, which brings the geometric mean performance down to 1.69x.

\subsubsection{Words and Bounded-Repeat Categories (8 benchmarks)}
\label{sec:eval-words}
The words and bounded repeat (or counters) categories consist mostly of benchmarks that are
simple with many short matches, such as \texttt{\bslash{}b\bslash{}w+\bslash{}b} and
\texttt{[A-Za-z]\{8,13\}}.
\REs{} performs very well in these categories,
as do other automata-based engines.
What sets \resharp{} apart is the ability to efficiently handle
\emph{Unicode} as discussed also later in Section~\ref{sec:eval-extended}.
On patterns such as \texttt{\bslash{}b\bslash{}w\{12,\}\bslash{}b}, \REs{} is over 7x
faster than the next best engine, pcre2/jit, and over 10x faster than the rest of the competition.

The reason for this gap is that \bslash{w} denotes a very large character
set, and other automata-based engines cannot handle it as
efficiently, as \bslash{w} may contribute with tens of thousands
of individual transitions.  \resharp{} also has a very efficient
implementation of the word boundary \bslash{b} -- represented
via negative lookarounds in the engine and encoded directly into DFA
transitions -- which causes the engine to have a very fast inner
matching loop.
Another benchmark where \resharp{} excels at is the `context'
benchmark, which uses
{\small\verb![A-Za-z]{10}\s+[\s\S]{0,100}Result[\s\S]{0,100}\s+[A-Za-z]{10}!},
it is over 8x faster than other automata-based engines, which struggle
with the {\small\verb![\s\S]{0,100}!} part of the pattern, as it creates many
transitions in the automaton.
This is where the algebraic approach to
regex matching shines, as the engine can easily detect redundant
transitions through the \textsc{loop} rule in Figure~\ref{fig:rw} and thereby
minimize the automaton on the fly.

\subsubsection{CloudFlare-ReDOS (3 benchmarks)} 
\label{sec:eval-cloud}

This category is designed to showcase worst-case performance of regex
engines.  The regexes themselves are not very practical, but useful to
distinguish between the engines that have good worst-case performance
and those that do not.

The cloud-flare-redos category is a set of regexes that are designed
to trigger catastrophic backtracking in backtracking regex engines.
The benchmark comes in three variants: the `original' variant is using
the pattern that caused the CloudFlare outage in 2019, while the
`simplified-short' and `simplified-long' variants are matching the
regex \texttt{.*.*=.*}, which on linear complexity engines is
essentially benchmarking ``how fast can you find  the equals sign'', and on
backtracking engines is a worst-case scenario, where certain
backtracking engines are hundreds of thousands, even millions of times
slower than the best performing engines.  \REs{} is the top performer
in the `original' variant of the benchmark, which does not really show
anything meaningful about the engine, apart from the fact that it does
not suffer from catastrophic backtracking.

\subsubsection{Quadratic (3 benchmarks)} 
\label{sec:eval-quadratic}

\begin{figure}[t]
    \includegraphics*[scale=0.8,trim=0.6cm 0.2cm 0.5cm 0.7cm]{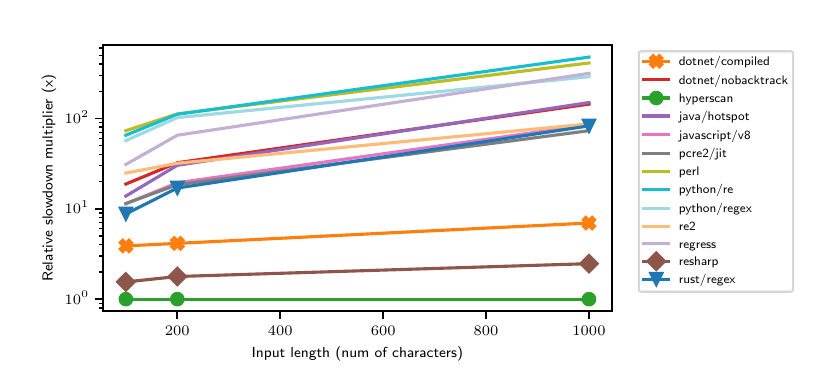}
    \vspace{-1em}
      \caption{\emph{Quadratic} benchmark results for 1x=100, 2x=200, 10x=1000}
      \label{fig:quadratic}
\end{figure}

The regexes here are intended to trigger
quadratic behavior in all-match scenarios. The benchmark comes in three variants:
`1x', `2x', and `10x', which illustrate how the performance of the engines
scales with the input length. While the reason for 
Hyperscan's excellent performance is \emph{earliest} match semantics, which guarantees
linearity, the reason for \resharp{}'s close-to-linear performance is different.
While Figure~\ref{fig:baseline-evaluation} shows the $\mu_g$ relative slowdown ratios,
Figure~\ref{fig:quadratic} illustrates the three variants separately.

The reason for \resharp{}'s excellent performance in the quadratic category is that the engine
finds a match in a fixed number of input-parallel steps,
which it detects at the level of the algebraic
representation of the regex. The engine still suffers from the number of matches, which brings
the performance down to $O(pn)$, where $p$ is the number of parallel steps required to 
process the input, and $n$ is the number of matches. The engine is still significantly faster than other
engines in the category apart from Hyperscan, which has a linear time complexity by design.

\subsubsection{Date and Dictionary (3 benchmarks)} 
\label{sec:eval-date}

These are large regexes with many alternations. The date benchmark is
described as a tokenizer for dates in various formats, which has
numerous short matches of <5 characters. The dictionary benchmark
consists of approximately 2500 words, which measures the speed of
traversing a string with many alternation.

While \resharp{} wins here, see Figure~\ref{fig:baseline-evaluation}, 
both of these categories have flaws and
should be taken with a grain of salt. Neither of the patterns are
sorted by descending length, which means that the pattern consists of
many alternations completely unreachable to PCRE engines, such as
\texttt{may|mayo}, where the engine will never match \texttt{mayo} at
all.

Upon further inspection, this behavior seems to originate from a
near decade old semantic bug in a Python library for finding
dates~\cite{datefinder}, that gets millions of downloads per month,
but has somehow gone unnoticed.  And the dictionary benchmark consists
of many alternations ordered such as
\texttt{(absentmindedness|absentmindedness's)}, where the second
alternation will never be matched.  Furthermore, the dictionary
benchmark contains only \emph{one match}, which barely explores
any of the state
space of the automaton than can arise from the regex.

Since \REs{} uses a larger regex in both of these benchmarks, it also
reports a slightly higher match length sum of 111832 instead of
111825 in the two \emph{date} benchmarks, where certain matches are
longer than their \emph{leftmost-greedy} counterparts.
For this reason, these benchmarks are separately analyzed in the
extended benchmark in Section~\ref{sec:eval-extended}, where the
patterns are sorted by length, and the performance of the engines is
compared in a more controlled environment.

\subsection{Extended Comparison}
\label{sec:eval-extended}

The extended comparison consists of a set of regexes to emphasize the
strengths of our engine. The benchmarks have been split into several
categories.  The first category consists of modified versions of the
\emph{date} and \emph{dictionary} benchmarks from the rebar benchmark
suite. Hyperscan is included in very few of these benchmarks as it 
does not support \bslash{b} with Unicode characters or lookarounds or 
patterns that exceed a certain length, but for the sake of comparison,
we include it in benchmarks using multi-pattern mode and \emph{earliest} 
match semantics whenever possible.

The extended summary $\mu_g$ of speed ratios is shown in
Figure~\ref{fig:extended-evaluation} \emph{with a separate column for each
benchmark category below as indicated by the column title}.  The actual $\mu_g$
of several engines is larger than shown, as many of the benchmarks are
designed to push the engines to their limits, and the engine may not
finish the benchmark in \textbf{1 minute} that is the \emph{cut-off} time.

\subsubsection{Date and Dictionary Amended (2 benchmarks)}
\label{sec:eval-date-fixed}

The benchmarks are the same as in the baseline comparison
apart from two small, but significant, changes:
\begin{itemize}
    \item[\checkmark] alternations (unions) are sorted in descending order by length
    \item[\checkmark] inputs contain not just one but over a thousand unique matches
\end{itemize}

The large amount of \emph{unique} matches is especially important, as it
prevents the lazy automata engines from creating a tiny purpose-built automaton for matching
the exact same string over and over. Adding more matches to the input drops the performance
of the lazy automata engines significantly, including \REs{}.

The throughput reported for \REs{} in the \emph{dictionary} benchmark is 564.5MB/s with 
1 match, and 107.3MB/s with 2663 matches. But what is notable here is that the
performance of \REs{} does not fall with complexity at the same rate as the other automata engines. 
Where the throughput of rust/regex falls from 535.6MB/s to 8.9MB/s, and the throughput of
re2 falls from 3.6MB/s to 618KB/s. Even the throughput
of Hyperscan falls from 5.4 GB/s to 104.6MB/s by increasing the number of matches,
which is just slightly below the throughput of \REs{}. 
\REs{} still maintains this level of performance with even far more 
complex regexes, as show in the \emph{monster} regex category in Section~\ref{sec:eval-monster}.

\subsubsection{Monster Regexes (5 benchmarks)}
\label{sec:eval-monster}
This category comprises large regexes designed to 
stress the engines to their limits. These regexes are challenging
for both backtracking as well as automata engines, where the
former will suffer from redundant work and the
latter will suffer from large state space complexity.
This category illustrates one of the biggest strengths of \REs{}, where
it dominates the competition in all of the benchmarks in this category, 
see Figure~\ref{fig:monster}, thanks to both its small symbolic automaton
and algebraic rewrites. Hyperscan is included in the first three benchmarks
as these patterns can be split into multiple individual patterns and run in
multi-pattern mode, but not in the last two benchmarks, as these 
consist of one large pattern, which exceeds the maximum size supported by Hyperscan.
All of the patterns exceed the maximum size supported by pcre2/jit as well, which is
why it is not included.

The first benchmark in this category is the same dictionary regex as
in the previous benchmark, but with \emph{case insensitivity}
enabled (\texttt{IgnoreCase} option or
\texttt{(?i:$R$)}). Ignoring case on the dictionary regex
significantly increases the state space complexity of the regex, and
neither backtracking nor the automata engines can handle it. 
Hyperscan with multi-pattern mode does well here, albeit
with an easier pattern than the others because of \emph{earliest}
match semantics.
Perl seems to have some interesting optimizations for ASCII 
dictionaries specifically, being the only backtracking engine that can handle
it. But the interesting part in this benchmark is how, very
counterintuitively, the performance of {\REs} and dotnet/nobt
\emph{increases} when case is ignored.

The throughput of \REs{} with the case-insensitive dictionary regex actually 
increases by $\approx$ 40\% over the case-sensitive version, which is due to the size 
of the automaton \emph{decreasing} when case is ignored, as the
engine can merge transitions together. This is a very interesting
result, as many others completely fall apart with case insensitivity
enabled, with their throughput falling hundreds of times compared to the case-sensitive version.

The second and third benchmarks are similar dictionary regexes, but with
unicode characters. This benchmark illustrates the performance of the engine on unicode
character classes, which are difficult to handle for most engines. On
the case-insensitive version of the unicode dictionary, most engines are several
orders of magnitude behind \REs{}, apart from hyperscan and dotnet/nobt, 
which are 5.2x and 5.5x slower than \REs{}, 
which are still very good results, as the fourth-fastest engine, 
perl, is 862x slower than \REs{}.

The last two benchmarks add a context of 50 characters in the form of \texttt{.\{0,50\}} 
on either side of an already difficult case-insensitive dictionary regex, 
which forces the engine to explore a significant amount of possibilities, 
as the context can be anything. 
These benchmarks are a difficult scenario for even dotnet/nobt, which
otherwise manages to keep up in these difficult scenarios, here 
even dotnet/nobt is 20x slower than \REs{}.

\subsubsection{Hidden Passwords (11 benchmarks)}
\label{sec:eval-extended-hidden-passwords}

This category illustrates something that is very difficult to express
in standard regex syntax, which causes the pattern to be very large
and slow to handle for most engines. \REs{} uses \emph{intersection} to
demonstrate how the performance does not degrade at the rate of other
engines that use \emph{union} to express an equivalent pattern but at a \emph{factorial} cost.
The main regex is an intersection of constraints, where the match must
contain at least one character in all of \texttt{[0-9]}, \texttt{[a-z]},
\texttt{[A-Z]}, \texttt{[!-/]}, and the password must have a certain
length that varies throughout the benchmark. To
simplify, all of the inputs here have been limited to ASCII.

This benchmark, see Figure~\ref{fig:hidden-passwords}, illustrates
the same principle as the \emph{monster} regexes, where the
performance of the engine does not degrade at the rate of other
engines.  While re2 and rust/regex are able to handle the pattern up
to 8 characters, both of the engines hit a wall at 9 characters, where
the performance of the engines drops significantly. The performance of
Hyperscan is also dropping at 9 characters, but it stops accepting the
pattern at 10 characters. The search-time performance of \REs{} is still
reasonable at 15 characters and above, and the throughput of the
engine is still in the hundreds of MB/s.  Without intersections,
the performance of \REs{} would also hit a wall soon after the others,
as the state space of the automaton grows at an exponential rate,
but using intersections allows to 
keep the automaton small and the performance of the engine high.

\begin{figure}[H]
    \includegraphics*[scale=0.8,trim=0.7cm 0cm 1.4cm 0.7cm]{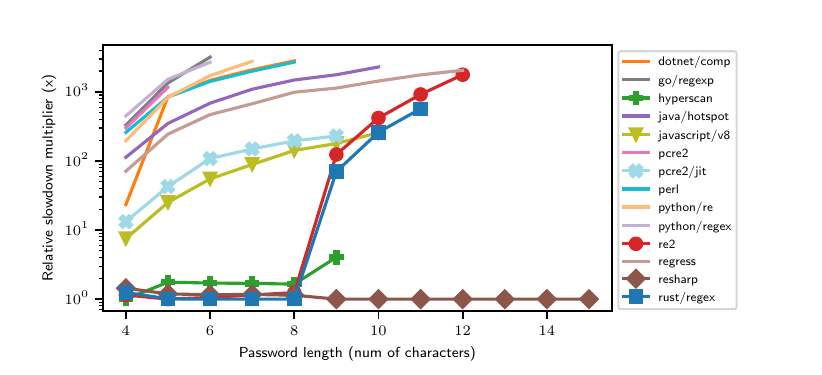}
    \vspace{-1em}
      \caption{Searching for hidden passwords of increasing length.}
      \label{fig:hidden-passwords}
  \end{figure}

\subsubsection{Long Matches (7 benchmarks)}
\label{sec:eval-long-matches}
This category is designed to test the engines' ability to accelerate long matching patterns,
see Figure~\ref{fig:long-matches}.
The input used in this category consists of long lines, averaging around 3000 characters in length.
The patterns used in the category are designed so that the engine has to scan the entire line to find a match,
but the engine has many opportunities to skip characters during the inner loop of matching.

An interesting observation from this category is that many of the engines have one-off optimizations for long matches,
where certain patterns with the exact same language are significantly faster than others.
For example the pattern used in the \emph{skip-5} benchmark, 
\texttt{(?m)\caret{}.*1.*1.*1.*1.*1.*\$} is 
120x faster than the pattern in the \emph{skip-5-loop} benchmark \texttt{(?m)\caret{}.*(1.*)\{5\}\$} for dotnet/comp,
as optimizations get detected and applied in the former, but not in the latter.
The same is true for python/re and pcre2/jit which both lose 
performance noticeably with the loop variant of the pattern.
\REs{} actually loses one benchmark in this category, the \emph{skip-2} benchmark with the 
pattern \texttt{(?m)\caret{}.*1.*1.*\$}, where dotnet/comp vectorizes 
the first part of the pattern as well. The benchmarks \emph{skip-3} and \emph{skip-5} 
show that this behavior does not apply to the remainder of the pattern, 
as the performance of dotnet/comp drops significantly with the number of skips.

The reason why \REs{} is able to outperform the other engines in this benchmark 
is that derivatives allow the engine to cheaply infer which transitions are redundant,
which lets the engine use input-parallelism to skip over large parts of the input, which
gives the engine an advantage of an order of magnitude over most other engines in this category.

\begin{figure}
  \centering
  \begin{subfigure}{6.8cm}
    \includegraphics*[width=6.8cm,height=6cm,trim=0.7cm 0cm 1.4cm 1cm]{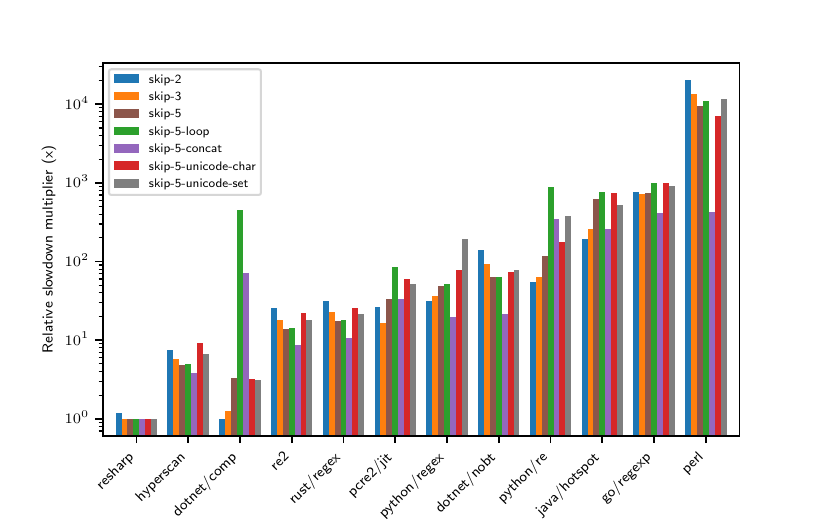}
    \vspace{-1.5em}
    \caption{Long matches (Section~\ref{sec:eval-long-matches}). \label{fig:long-matches}}
  \end{subfigure}
\hfill
  \begin{subfigure}{6.8cm}
    \includegraphics*[width=6.8cm,height=6cm,trim=0.7cm 0cm 1.4cm 1cm]{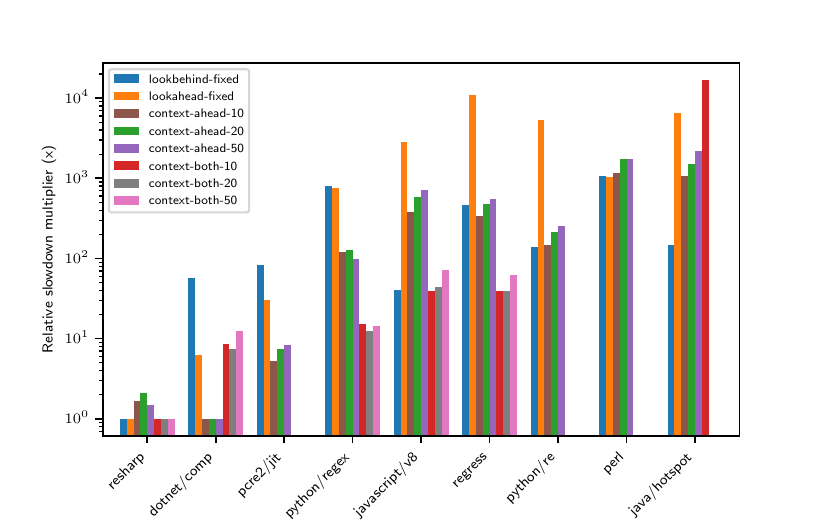}
    \vspace{-1.5em}
    \caption{Lookarounds (Section~\ref{sec:eval-lookarounds}). \label{fig:extlookarounds}}
  \end{subfigure}
\vspace{-1em}
\caption{Long and Lookaround benchmarks. $y$-axis is relative slowdown in \emph{log} scale.}
\label{fig:long-and-lookaround}
\end{figure}

\subsubsection{Character Sets and Unicode (7 benchmarks)}
\label{sec:eval-sets-unicode}
This category shows the performance of symbolic character sets in \REs{}, i.e., the power of $\A$. 
The patterns used in this category are to find words containing a certain character set,
such as \texttt{\bslash{b}\bslash{w}\st[abc]w\st\bslash{b}}. To add a layer of complexity, 
both the input and character set are unicode characters, which makes the pattern
difficult to handle for most engines, see Figure~\ref{fig:sets-and-unicode}.

The first two benchmarks \emph{word-vowels-1} and \emph{word-vowels-2},
illustrate how \REs{} is able to use the derivative-based framework to
incorporate vectorized character set matching into simple word patterns
and be an order of magnitude faster than the other engines. 

The \emph{word-vowels-3 and word-vowels-4} benchmarks 
are more complex both in terms of number of matches and the complexity of the pattern,
where the performance of \REs{} is still very good, but not as dominant as in the first two benchmarks.

The \emph{word-vowels-5-to-digits and word-digits-to-vowels-5} benchmarks 
illustrate that this character set efficiency works in both directions, where
certain engines (e.g., python/re and dotnet/comp)
locate the digits-first variant significantly faster than the vowels,
but the throughput for \resharp{} is nearly identical for both. The \emph{many-set-constraints}
benchmark illustrates a more complex scenario, where the engine has to find
many of these set constraints in a single match, which severely slows
several engines down, but \REs{} is still able to
maintain a throughput of 757.7MB/s, where most other engines are
in KB/s.

\subsubsection{Lookarounds (8 benchmarks)}
\label{sec:eval-lookarounds}
The lookarounds category illustrates that all the optimizations apply
to lookarounds in \REs{} as well.  There are many benchmarks in this
category where the performance of the engine is several orders of
magnitude faster than in other engines, see
Figure~\ref{fig:extlookarounds}. Note that certain engines
(e.g. pcre2/jit) are omitted from the context-both benchmarks as they do
not support unbounded lookbehinds.

The \emph{lookbehind-fixed} and \emph{lookahead-fixed} benchmarks demonstrate the efficacy of simple
string literal prefilter optimizations, e.g. 
those in Section~\ref{sec:combined-techniques}, where
the engine is able to
vectorize the search for patterns containing both prefix and suffix lookarounds,
which makes \REs{} several orders of magnitude faster than the other engines.

While Figure~\ref{fig:lookbehind-linear} showed that the performance of lookbehinds is linear for 
all matches, the same is not true for lookaheads. The only guarantee for lookaheads is that 
a single match is input-linear, but the performance of the engine degrades if 
there are multiple matches depending on the same lookahead context, which is
located far away from the match.
The \emph{context-ahead} benchmarks illustrate this, where the performance of the engine
is slightly behind dotnet/comp, as both of the engines suffer from quadratic all-matches
behavior in this case.

Despite this, the performance of \REs{} is still
very good in this category compared to the rest of the engines, 
where there is a noticeable lack of optimizations for lookarounds. 
How to eliminate quadratic all-matches behavior for lookaheads is a topic for future work.

The \emph{context-both} benchmarks illustrate the scenario where each match is dependent on both a lookbehind and a lookahead,
where \REs{} has a lead similar to Figure~\ref{fig:lookbehind-linear}, as the
linear time \emph{all-matches} complexity of lookbehinds is not applicable to other engines.

\section{Related and Future Work}
\label{sec:related}

This work builds upon and uses the theory of location based
derivatives introduced in~\cite{PLDI2023}, and the implementation
builds upon the open source {.NET} regular expression
library~\cite{regexsources}.  The match semantics supported in {\REs}
is \emph{leftmost-longest} (POSIX) rather than \emph{leftmost-greedy}
(a.k.a., \emph{backtracking} or PCRE) semantics.  It is unclear how to
support extended Boolean operators in backtracking in the first place
and what their intended semantics would be -- this is primarily
related to that $\alt$ is \emph{non-commutative} in the backtracking
semantics and therefore some key distributivity laws such as $X(Y\alt
Z)\equiv XY\alt XZ$ no longer preserve match semantics.
For example, in PCRE \texttt{(a|ab)(c|b)} matches the prefix
\texttt{"ab"} of \texttt{"abc"} but \texttt{(a|ab)c|(a|ab)b} matches the whole
string \texttt{"abc"}. Consequently, many rewrite rules based on
derived Boolean laws, such as \textsc{sub1} and \textsc{loop}
in Figure~\ref{fig:rw}, become invalid in PCRE.

In functional programming derivatives were studied
in~\cite{Fis10,ORT09} for {\IsMatchName}. \cite{SuLu12,ADU16} study
matching with Antimirov derivatives and POSIX semantics and also
Brzozowski derivatives in \cite{ADU16} with a formalization in
Isabelle/HOL. The algorithm of \cite{SuLu12} has been recently further
studied in~\cite{TanUrban23,Urban23}.  It is also mentioned
in~\cite[p.22]{Urban23} that \emph{reversal}, as used
in~\cite{PLDI2023}, is not directly applicable in the context of the
\cite{SuLu12} algorithm. Partial derivatives of regular expressions
extended with complement and intersection have also been studied in
\cite{DBLP:conf/lata/CaronCM11}.  These works do not support
lookarounds (or anchors).  The key difference with the work of
\emph{transition regexes} used in SMT~\cite{SVB21} is that the theory
of transition regexes does not support lookarounds.  However,
generalizing transition regexes to location based derivatives is an
interesting direction for future work.

The conciseness of using intersection and complement in regular
expressions is demonstrated in~\cite{Gelade2012} where the
authors show that using intersection and complement in regular
expressions can lead to a double exponentially more succinct
representation of regular expressions. Here we have experimentally shown
how the enriched expressivity can enable practical scenarios for
matching that are otherwise not possible.

Regular expressions have in practice many extensions, such as
\emph{backreferences} and \emph{balancing groups}, that reach far
beyond \emph{regular} languages in their expressive power. Such
extensions, see~\cite{LMK19}, fall outside the scope of {\REs}.
Lookaheads do maintain regularity~\cite{Morihata12} and regular
expressions with lookaheads can be converted to Boolean
automata~\cite{Berglund21}. \cite{DBLP:journals/ieicetd/ChidaT23}
consider extended regular expressions in the context of backreferences
and lookaheads. They build on \cite{DBLP:conf/lata/CarleN09} to show
that extended regular expressions involving backreferences and both
positive and negative lookaheads leads to \emph{undecidable}
emptiness, but, when restricted to positive lookaheads only is closed
under complement and intersection.  \cite{Miyazaki2019} present an
approach to finding match end with derivatives in regular expressions
with lookaheads using \emph{Kleene algebras with lookahead} as an
extension of Kleene algebras with tests~\cite{Kozen97} where the
underlying semantic concatenation is \emph{commutative} and
\emph{idempotent} -- it is unclear how lookbehinds and reversal fit in
here.  Derivatives combined with Kleene algebras are also studied
in~\cite{Pous15}.

Some aspects of our work here are related to SRM~\cite{Vea19} that is
the predecessor of the \textsc{NonBacktracking} regex backend of
{.NET}~\cite{PLDI2023}, but SRM lacks support for lookarounds as well
as anchors and is neither POSIX nor PCRE compliant.  Intersection was
also included as an experimental feature in the initial version of SRM
by building directly on derivatives in~\cite{Brz64}, and used an
encoding via regular expression \emph{conditionals} that unfortunately
conflicts with the intended semantics of conditionals and therefore
has, to the best of our knowledge, never been used or evaluated.

State-of-the-art nonbacktracking regular expression matchers based on
automata such as RE2~\cite{Cox10} and grep~\cite{grep} using variants
of~\cite{Thom68}, and Hyperscan~\cite{HyperscanUsenix19} using a
variant of~\cite{Glu61}, as well as the derivative based
\textsc{NonBacktracking} engine in {.NET} make heavy use of
\emph{state graph memoization}.  None of these engines currently
support lookarounds, intersection or complement.  A general advantage of using
derivatives is that they often minimize the state graph (but do not
guarantee minimization), as was already shown
in~\cite[Table~1]{ORT09} for DFAs.  Similar discussion apperas also
in~\cite[Section~5.4]{SuLu12} where NFA sizes are compared for
Thompson's and Glushkov's, versus Antimirov's constructions, showing
that Antimirov's construction consistently yields a smaller state
graph. Further comparison
with automata based engines appears in~\cite{PLDI2023}.

The two main standards for matching are PCRE (backtracking semantics)
and POSIX~\cite{Lau2000,Berg21}.  \emph{Greedy} matching algorithm for
backtracking semantics was originally introduced in~\cite{FC04}, based
on $\epsilon$-NFAs, while maintaining matches for eager loops.
In the current work we focused on the expressivity of a \emph{single}
regular expression. Compared to lookarounds, there are different
approaches to achieving contextual information, e.g. it can be done
programmatically by matching multiple regular expressions, or by the
use of transducers, e.g. Kleenex \cite{Grathwohl2016}, that can
produce substrings in context at high throughput rates.

The theory of derivatives based on locations that is developed here
can potentially be used to extend regular expressions with lookarounds in SMT
solvers that support derivative based lazy exploration of regular
expressions as part of the sequence theory, such solvers are
CVC5~\cite{DBLP:conf/tacas/BarbosaBBKLMMMN22,CVC4deriv} and Z3~\cite{BM08,SVB21}.  A further
extension is to lift the definition of location derivatives to a fully
\emph{symbolic} form as is done with \emph{transition regexes} in
Z3~\cite{SVB21}. \cite{10.1145/3498707} mention that the OSTRICH
string constraint solver could be extended with backreferences and
lookaheads by some form of alternating variants of prioritized
streaming string transducers (PSSTs), but it has, to our knowledge,
not been done.  Such extensions would widen the scope of analysis of
string verification problems that arise from applications that involve
regexes using anchors and lookarounds.  It would then also be
beneficial to extend the SMT-LIB~\cite{SMTLIB} format to support
lookarounds.

Counters are a well-known Achilles heel
of essentially all nonbacktracking state-of-the-art regular expression
matching engines as recently also demonstrated in~\cite{THHLVV22}, which makes any
algorithmic improvements of handling counters highly valuable.
In~\cite{CsA20}, Antimirov-style derivatives~\cite{Ant95} are used to
extend NFAs with counting to provide a more succinct symbolic
representation of states by grouping states that have similar behavior
for different stages of counter values together using a data-structure
called a \emph{counting-set}.  It is an intriguing open problem to
investigate if this technique can be adapted to work with location
derivatives within our current framework.  
\cite{DBLP:journals/pacmpl/GlaunecKM23} point out that it is important
to optimize specific steps of regular expression matching to address
particular performance bottlenecks. The specific BVA-Scan algorithm is
aimed at finding matches with regular expressions containing counters
more efficient. \cite{DBLP:conf/fossacs/HolikSTV23} report on a subset
of regexes with counters called synchronizing regexes that allow for fast
matching.

Recently \cite{DBLP:journals/pacmpl/MamourasC24} presented a new
algorithm for matching look\-arounds with Oracle NFAs, which are
essentially cached queries to the oracle that can be used to match
lookarounds.  The semantics presented in their paper is consistent
with the semantics of \RE{} but without support for $\rand{},\rnot$,
as described in~\cite[Section~3.7]{PLDI2023} using derivation
relations, for example, $\pair{\loc{s}{i}}{\loc{s}{j}}\models\la{R}$
iff $i=j$ and $\loc{s}{i}\DERS{R{\conc}\all}\loc{s}{|s|}$.
We could not find any
implementation of the algorithm, but it would be interesting to
compare the lookaround approaches.

\cite{DBLP:journals/corr/abs-2311-17620} present a new nonbacktracking algorithm
for matching JavaScript regular expressions with lookarounds in linear time. 
The first steps of the algorithm construct an oracle similar to the one in
\cite{DBLP:journals/pacmpl/MamourasC24}, but leveraging JavaScript semantics for
the unique capability of matching capture groups both in the main regex and 
lookarounds in linear time.
The algorithm is implemented as an NFA engine, whereas \REs{} is a 
lazy DFA engine. The fragment of regexes is also orthogonal to the regexes in \REs{}:
while intersection and complement are not allowed,
at the same time lookarounds can be used freely in any context. 
But there is a common subset on which it would be interesting to see how 
the two approaches compare in practice.
Supporting capture groups in \REs{} is an interesting extension for future work,
which could potentially be done as in \cite{PLDI2023} or 
using tagged-NFA's as in \cite{Lau2000}.

\section{Conclusion}
\label{sec:conc}
We have presented both a theory and an implementation for
extended regular expressions including complement,
intersection and positive and negative lookarounds that have not
previously been explored in depth in such a combination.  Prior work
has analyzed different other sets of extensions and their properties,
but several such combinations veer out of the scope of regular
languages.

We have demonstrated the practicality of the class {\REs} and the
power of algebraic simplification rules through derivatives. We have
included extensive evaluation using popular benchmarks and compared to
industrial state-of-the-art engines that come with decades of expert
level automata optimizations, where {\REs} shows $71\%$ improvement
over the fastest industrial matcher today, already for the
\emph{baseline}, while enabling reliable support for features out of
reach for all other engines. There are also many interesting open
problems and extensions remaining.

We expect that these new insights will change how regular expressions
are preceived and the landscape of their applications in the
future. Potentially enabling new applications in LLM prompt
engineering frameworks, new applications in medical research and
bioinformatics, and new opportunities in access and resource policy
language design by web service providers, where regexes are an
integral part but today limited to \emph{very restricted fragments of
$\REstd$} due to their application in security critical contexts with
high reliability requirements in policy engines.

\bibliographystyle{ACM-Reference-Format}
\bibliography{bib}

\end{document}